\documentclass[conference, letterpaper]{IEEEtran}
\ifCLASSINFOpdf
\else
\fi
\ifCLASSINFOpdf
   \usepackage[pdftex]{graphicx}
\else
\fi

%
\usepackage{amsmath,amsfonts,amssymb}
\usepackage{color}
\usepackage{fancyhdr}

\renewcommand{\thispagestyle}[2]{}

\fancypagestyle{plain}{
        \fancyhead{}
        \fancyhead[C]{first page center header}
        \fancyfoot{}
        \fancyfoot[C]{first page center footer}
}
\pagestyle{fancy}

\headheight 20pt
\footskip 20pt

\rhead{}

\setcounter{page}{1}




\fancyfoot[C]{}

\fancyfoot[R]{\thepage \  $|$ P a g e }

\begin{document}

\newtheorem{theorem}{Theorem}
\newtheorem{proposition}[theorem]{Proposition}
\newtheorem{definition}[theorem]{Definition}
\newtheorem{remark}[theorem]{Remark}
\newtheorem{example}[theorem]{Example}
\newtheorem{claim}[theorem]{Claim}
\newtheorem{problem}[theorem]{Problem}
\newtheorem{corollary}[theorem]{Corollary}

%
\title{Searching for network modules}

\author{\IEEEauthorblockN{Giovanni Rossi}
\IEEEauthorblockA{Dept of Computer Science and Engineering - DISI\\
University of Bologna\\
40126 Bologna, Italy\\
Email: giovanni.rossi6@unibo.it}}


%


\maketitle

\begin{abstract}
The kinds of data meaningfully represented as networks seem ever-increasing, from the social sciences to PPI networks in bioinformatics. When analyzing these networks a key target is to uncover
their modular structure, which means searching for a family of modules, namely node subsets spanning each a subnetwork more densely connected than the average. Objective
function-based graph clustering procedures rely on optimization techniques such as modularity maximization, and output a partition of nodes, i.e. a family of pair-wise disjoint
subsets. But some nodes could be included in multiple modules, thus requiring an overlapping modular structure. The issue is mostly tackled by means of fuzzy clustering approaches, where
each node may be included in different modules with different [0,1]-ranged memberships.
This work proposes a novel type of objective function for graph clustering, in the form of a multilinear polynomial extension whose coefficients are determined by network topology. It may be
thought of as a potential function, to be maximized, taking its values on fuzzy clusterings or families of fuzzy subsets of nodes over which every node distributes a unit
membership. When suitably parameterized, this potential is shown to attain its maximum when every node concentrates its all unit membership on some module. The output thus remains a
partition, but the original discrete optimization problem is turned into a continuous version allowing to conceive alternative search strategies. The instance of the problem being
a pseudo-Boolean function assigning real-valued cluster scores to node subsets, modularity maximization is employed to exemplify a so-called quadratic form, in that the scores of 
singletons and pairs also fully determine the scores of larger clusters, while the resulting multilinear polynomial potential function has degree 2. After considering further
quadratic instances, different from modularity and obtained by interpreting network topology in alternative manners, a greedy local-search strategy for the continuous framework is
analytically compared with an existing greedy agglomerative procedure for the discrete case. Overlapping is finally discussed in terms of multiple runs, i.e. several local
searches with different initializations.   
\end{abstract}


\begin{IEEEkeywords}
Graph clustering; Modularity maximization; Fuzzy clustering; Pseudo-Boolean function; Multilinear polynomial extension; PPI network; Overlapping modular structure.
\end{IEEEkeywords}

%
\IEEEpeerreviewmaketitle

\section{Introduction}
Networks or graphs are pairs whose elements are a set of nodes or vertices and a set of unordered pairs of nodes, namely the links or edges. In
the weighted case, real-valued weights on edges generally measure some intensity of the relation between the two endvertices. A great variety of data may
be represented by means of networks \cite{BookBarabasiNewmann2006}, with weighted edges clearly allowing for more flexibility. An oldest and notorious example is found in the
social sciences, where nodes are individuals and links formalize friendship/influence relations. More recently, given the increasing availability of genome-scale data
on protein interactions, much attention is being paid to PPI (protein-to-protein interaction) networks, where nodes are proteins and links
formalize their interaction in metabolic processes \cite{
CFinderBioinfo2006,BMCBioinfoClusCoeff2007,PPIclusteringBioinformatics2007,PLOSCompBio2007,Proteomics2013,Proteins2004,MolSysBio2007,HierarLayerBioinfo2012,PPIclusteringBMCBioinfo2009}. 

These complex networks describing so different real-world phenomena display several common features \cite{ClusteringCoefficientInfoSciences20017}, and in particular are all
organized in modular structures
\cite{CommunitiesReviewNewmannSIAM2003,NewmanPNAS2006}. A module or community basically is a `heavy cluster' of nodes, in the following
sense. The subgraph spanned by a subset of vertices includes only these vertices in the subset and those edges whose endvertices are both in the subset. A module then is a vertex
subset spanning a subgraph whose edge set is exceptionally dense. In the weighted case, the edges of sugraphs spanned by modules are literally heavy, i.e. collectively receiving a
significant fraction of the total weight on all edges. Mathematically speaking, searching for families of network modules is a generalization of graph clustering
\cite{GraphClusteringSchaeffer2007}, where the goal is to determine a peculiar family of heavy vertex subsets, namely a partition \cite{Aigner97}. A partition of vertices is a
collection of non-empty and pair-wise disjoint vertex subsets, called blocks, whose union is the whole vertex set.

\subsection{Related work}
The present paper provides an optimization framework and a greedy local-search strategy for graph clustering and network module detection, with focus on fuzzy
and/or overlapping modular structures.
Objective function-based graph clustering methods determine partitions of vertices relying on optimization, i.e. maximizing or minimizing a score or a cost. Hence the blocks of the
generated partition are optimal or maximally heavy vertex subsets as they are found to maximize or minimize a global objective function, where this latter commonly is a so-called
additive partition function \cite{FortunatoLongSurvey2010}. This means that a suitable set function assigns a cluster score/cost to every subset
of vertices, and then the global score/cost of any partition simply obtains by summing the scores/costs of its blocks. A main example is modularity maximization
\cite{OnModularityClustering2007,NewmanFastAlgorithm2004,NewmanPNAS2006,LocalModularityClustering2011}, where in particular the scores of vertex subsets depend only on the scores of the $n$ singletons and the $\binom{n}{2}$ pairs, denoting by $n$ the total number of vertices.

In a partition every vertex is included in precisely one block, while network modules may well be non-disjoint or
overlapping \cite{LancichinettiEtAl2009,XieEtAl2012}, and this seems mostly relevant for PPI networks, where modules are protein complexes.
The search for overlapping modular structures often resorts to fuzzy clustering algorithms
\cite{FuzzyCmeansBook2008,FuzzyPotts2004,IEEENanoBio2012,RoughFuzzyPlosONE2014,ZhangWangZhang2007},
whose
output is a family of fuzzy vertex subsets. Hence vertices may be included in diffent modules, with different [0,1]-ranged memberships. One way to look at the whole framework
formalized in the sequel is to see it as a tool for extending additive partition functions over families of fuzzy subsets. In fact, as the proposed model conforms
any chosen objective function-based graph clustering method employing an additive partition function to the fuzzy setting, it may be step-wise exemplified for
modularity maximization, starting with Section II hereafter. Next Section III introduces pseudo-Boolean functions \cite{BorosHammer02} in terms of the cluster score of vertex
subsets, while Section IV formalizes the implications of evaluating fuzzy clusterings through the polynomial multilinear extension of set functions. Toward tight
comparison with modularity maximization, Section V defines two topology-based quadratic cluster score functions, with focus respectively on (i) the weighted case, and (ii) the
clustering coefficient of spanned subgraphs. Section VI details a greedy local-search strategy for the continuous framework, and compares it with a well-known fast heuristic for
modularity maximization. Overlapping is discussed in Subsection VI.B in terms of multiple local searches with different initializations, while the
conclusion in Section VII also briefly examines how to model random networks with overlapping modules, in view of future work.

\section{Modularity}
Denote by $N=\{1,\ldots,n\}$ the $n$-set of vertices and by $2^N=\{A:A\subseteq N\}$ the $2^n$-set of vertex subsets. A network is usually intended to be a simple graph $G=(N,E)$,
i.e. with edge set $E\subseteq N_2$ included in the $\binom{n}{2}$-set of unordered pairs of vertices: $N_2=\{\{i,j\}:1\leq i<j\leq n\}\subset2^N$, where
$(\cdot,\cdot)$ and $\{\cdot,\cdot\}$ respectively are ordered and unordered pairs, while $\subset$ is proper inclusion. Such an edge set $E$ is identified by its characteristic
function $\chi_E:N_2\rightarrow\{0,1\}$, defined by
$\chi_E(\{i,j\})=\left\{\begin{array}{c}1\text{ if }\{i,j\}\in E\\0\text{ if }\{i,j\}\in E^c=N_2\backslash E\end{array}\right.$, and thus Boolean vector
$\chi_E\in\{0,1\}^{\binom{n}{2}}$ is an extreme point of the $\binom{n}{2}$-dimensional unit hypercube $[0,1]^{\binom{n}{2}}$. Networks with [0,1]-ranged weights on
edges correspond to non-extreme points of this $\binom{n}{2}$-cube. In other terms, their edge set is fuzzy: it includes unordered pairs of vertices, each with a membership in
[0,1]. Thus weighted networks
$G=(N,W),W\in[0,1]^{\binom{n}{2}}$ comprise non-weighted ones as those $2^{\binom{n}{2}}$ special cases where all the $\binom{n}{2}$ weights
range in $\{0,1\}$. Let $w_{ij}$ denote the weight on pair $\{i,j\}\in N_2$ or equivalently the $ij$-th entry of vector $W$.

Modularity maximization is a fundamental approach to module detection in complex networks via objective function-based graph clustering, and is usually applied to the non-weighted
case. As already outlined, modularity is an additive partition function, meaning that it takes real values on families $P=\{A_1,\ldots,A_{|P|}\}$ of non-empty vertex subsets
$A_1,\ldots,A_{|P|}\in 2^N$, called blocks, such that $A_l\cap A_k=\emptyset$ for $1\leq l<k\leq|P|$ and $A_1\cup\cdots\cup A_{|P|}=N$, where $|\cdot|$ is the number
of elements or cardinality of a set. Modularity is denoted by $\mathcal Q$ and defined, for given network $G=(N,W)$, in terms of the following quantities:
$w_i=\sum_{j\in N\backslash i}w_{ij}$ for every vertex $i\in N$ and $w_N=\sum_{\{i,j\}\in N_2}w_{ij}$. In non-weighted networks $G=(N,\chi_E),E\subseteq N_2$ these quantities
respectively are the degree $d_i=w_i$ or number of neighbors of every vertex $i$ and the total number $|E|=w_N$ of edges, while $w_{ij}=a_{ij}$ is the $ij$-th entry of the
($n\times n$ symmetric and Boolean) adjacency matrix \cite{SpectraGraphsBook2011}. On any partition $P$ modularity $\mathcal Q$ takes value:\smallskip\\
$\text{ }\text{ }\mathcal Q(P)=\frac{1}{2|E|}\sum_{1\leq i,j\leq n}\left(a_{ij}-\frac{d_id_j}{2|E|}\right)\delta_P(i,j)$, with\smallskip\\
$\text{ }\text{ }\delta_P(i,j)=\left\{\begin{array}{c}1\text{ if }i,j\in A\text{ for a block }A\in P\text ,\\0\text{ otherwise.}\end{array}\right.$\\
%
%
Note that the sum is over the $n^2$ ordered pairs of vertices, including those $n$ of the form $(i,i),i\in N$, and $\delta_P(i,i)=1$ for all partitions $P$. Therefore,
$\mathcal Q(P)=\sum_{A\in P}v_{\mathcal Q}(A)=$
\begin{equation*}
=\sum_{A\in P}
\left[
\sum_{i\in A}\left(-\frac{d_i^2}{4|E|^2}\right)+
\sum_{\{i,j\}\subseteq A}\left(\frac{a_{ij}}{|E|}-\frac{d_id_j}{2|E|^2}\right)
\right]\text ,
\end{equation*}
and the relation between $\mathcal Q$ and set function $v_{\mathcal Q}:2^N\rightarrow\mathbb R$ defined inside square parenthesis shall soon be looked at from a combinatorial
perspective. This expression details how modularity is an additive partition function, since global score is the sum over blocks of their score. In fact, $v_{\mathcal Q}(A)$ is
a measure of cluster score obtained by comparison with a probabilistic model that can only be mentioned here for reasons of space. Apart from constant terms given by the scores of
singletons, $v_{\mathcal Q}(A)$ is essentially determined by the difference between the fraction $\sum_{\{i,j\}\subseteq A}a_{ij}/|E|$ of edges whose endpoints are both in $A$,
and the expectation $\sum_{\{i,j\}\subseteq A}d_id_j/(2|E|^2)$ of such a fraction in the so-called configuration model, namely the random graph with same degree sequence
$(d_i)_{i\in N}$ \cite[p. 200]{CommunitiesReviewNewmannSIAM2003}.

A key feature of additive partition functions such as modularity is that the $2^n$ values $(v_{\mathcal Q}(A))_{A\in 2^N}$ of the underlying cluster score function $v_{\mathcal Q}$
are fully determined by the $1+n+\binom{n}{2}$ values taken on the empty set, where obviously $v_{\mathcal Q}(\emptyset)=0$, on the $n$ singletons $\{i\},i\in N$, where
$v_{\mathcal Q}(\{i\})=-\frac{d_i}{4|E|^2}$, and on the $\binom{n}{2}$ pairs $\{i,j\}\in N_2$, where\smallskip\\
$\text{ }\text{ }v_{\mathcal Q}(\{i,j\})=\frac{a_{ij}}{|E|}-\frac{d_id_j}{2|E|^2}-\frac{d_i^2}{4|E|^2}-\frac{d_j^2}{4|E|^2}$.

An analog of the configuration model for weighted networks $G=(N,W)$ seems missing, thus the weighted version
\begin{equation*}
\mathcal Q(P)=\sum_{A\in P}
\left[
\sum_{i\in A}\left(-\frac{w_i^2}{4w_N^2}\right)+
\sum_{\{i,j\}\subseteq A}\left(\frac{w_{ij}}{w_N}-\frac{w_iw_j}{2w_N^2}\right)
\right]
\end{equation*}
of modularity may be an objective function for clustering weighted networks, but has no probabilistic interpretation.

\section{Pseudo-Boolean cluster score functions}
Subsets or clusters $A,B\in 2^N$ and partitions or clusterings $P,Q\in\mathcal P^N$ of vertex set $N$ are elements of two fundamental posets (partially ordered sets), respectively the Boolean lattice
$(2^N,\cap,\cup)$ of subsets of $N$ ordered by inclusion $\supseteq$ and the geometric lattice $(\mathcal P^N,\wedge,\vee)$ of partitions of $N$ ordered by
coarsening $\geqslant$, i.e. $P\geqslant Q$ if for every $B\in Q$ there is $A\in P$ such that $A\supseteq B$, where $\wedge$ and $\vee$ respectively denote
the `coarsest-finer-than' or meet and the `finest-coarser-than' or join operators \cite{Aigner97}. Since these posets are finite, lattice functions
$v:2^N\rightarrow\mathbb R$ and $V:\mathcal P^N\rightarrow\mathbb R$ may be dealt with as points $v\in\mathbb R^{2^n}$ and $V\in\mathbb R^{\mathcal B_n}$ in
vector spaces\footnote{$\mathcal B_k=\sum_{1\leq l\leq k}\mathcal S_{k,l}$ is the (Bell) number of partitions of a $k$-set, while $\mathcal S_{k,l}$ is the
Stirling number of the second kind, i.e. the number of partitions of a $k$-set into $l$ blocks \cite{Aigner97,ConcreteMathematics,Rota1964}.}. A fundamental
basis of these spaces (apart from the canonical one) is provided by the so-called zeta function $\zeta$, which works as follows: for every $A\in 2^N$ and every
$P\in\mathcal P^N$, define $\zeta_A:2^N\rightarrow\{0,1\}$ and $\zeta_P:\mathcal P^N\rightarrow\{0,1\}$ by
$\zeta_A(B)=\left\{\begin{array}{c}1\text{ if }B\supseteq A\\0\text{ otherwise}\end{array}\right .$ and
$\zeta_P(Q)=\left\{\begin{array}{c}1\text{ if }Q\geqslant P\\0\text{ otherwise}\end{array}\right .$ (for all $B\in 2^N,Q\in\mathcal P^N$).
Then, $\{\zeta_A:A\in 2^N\}$ is a basis of $\mathbb R^{2^n}$ and $\{\zeta_P:P\in\mathcal P^N\}$ is a basis of $\mathbb R^{\mathcal B_n}$ (with axes indexed
respectively by subsets $A$ and partitions $P$). Set functions $v$ and partition functions $V$ are linear combinations of the elements of these bases, with
coefficients $\mu^v(A),A\in 2^N$ and $\mu^V(P),P\in\mathcal P^N$. That is,
$v(\cdot)=\sum_{A\in 2^N}\zeta_A(\cdot)\mu^v(A)$  and $V(\cdot)=\sum_{P\in\mathcal P^N}\zeta_P(\cdot)\mu^V(P)$, or
$v(B)=\sum_{A\subseteq B}\mu^v(A)$ and $V(Q)=\sum_{P\leqslant Q}\mu^V(P)$.

Functions $\mu^v:2^N\rightarrow\mathbb R$ and $\mu^V:\mathcal P^N\rightarrow\mathbb R$ are the \textit{M\"obius inversions} \cite{Aigner97,Rota64}
respectively of $v$ and $V$, obeying recursions $\mu^v(A)=v(A)-\sum_{B\subset A}\mu^v(B)$ and analogously $\mu^V(P)=V(P)-\sum_{Q<P}\mu^V(Q)$,
where $P>Q$ denotes \textit{proper coarsening}: there exist at least two blocks $B,B'\in Q$ and a corresponding block $A\in P$ such that $A\supseteq(B\cup B')$.

A partition function $V$ is additive when some set function $v$ satisfies $V(P)=\sum_{A\in P}v(A)$ for all $P\in\mathcal P^N$, in which case M\"obius inversions $\mu^v$ and $\mu^V$
are of course related. In particular, $\mu^V$ takes value $0$ on all partitions apart (possibly) from those $2^n-n$ where the number of non-singleton blocks is $\leq 1$ \cite{GilboaLehrer90GG,GilboaLehrer91VI}, namely the \textit{modular elements}\footnote{Modularity $\mathcal Q$ is meant to evaluate modular structures in complex networks, while the modular elements of
$(\mathcal P^N,\wedge,\vee)$ are those partitions $\hat P$ realizing equality $r(\hat P\wedge Q)+r(\hat P\vee Q)=r(\hat P)+r(Q)$ for all $Q\in\mathcal P^N$, where $r(P)=n-|P|$ is
the rank (see \cite{Aigner97} on modular lattices/lattice functions).} \cite{Aigner97,Stanley1971}
of geometric lattice $(\mathcal P^N,\wedge,\vee)$, i.e. the top $P^{\top}=\{N\}$, the bottom $P_{\bot}=\{\{1\},\ldots,\{n\}\}$, and those with form $P^A_{\bot}=\{A,\{i_1\},\ldots ,\{i_{n-|A|}\}\}$ for $1<|A|<n$, where $\{i_1,\ldots ,i_{n-|A|}\}=N\backslash A=A^c$. Recursively, the values taken by
M\"obius inversion $\mu^V$ on these modular elements are:\smallskip\\
$(a)$ $\mu^V(P_{\bot})=\sum_{i\in N}v(\{i\})=n\mu^v(\emptyset)+\sum_{i\in N}\mu^v(\{i\})$,\smallskip\\
$(b)$ $\mu^V(P^A_{\bot})=\mu^v(A)+(-1)^{|A|+1}\mu^v(\emptyset)$ for $1<|A|\leq n$,\smallskip\\
with $P^N_{\bot}:=P^{\top}$. These partition functions admit a continuum of set functions $v,v'$ satisfying $\sum_{A\in P}v(A)=\sum_{A\in P}v'(A)$ for all $P$, the
requirements being $\sum_{i\in N}v({i})=\sum_{i\in N}v'({i})$ or $n\mu^v(\emptyset)+\sum_{i\in N}\mu^v({i})=n\mu^{v'}(\emptyset)+\sum_{i\in N}\mu^{v'}({i})$, as well as $\mu^{v'}(A)=v(A)-\sum_{B\subset A}\mu^{v'}(B)$ or $v'(A)=v(A)$ for $|A| > 1$. In general, $v(\emptyset)=\mu^v(\emptyset)$ and $v(\{i\})=\mu^v(\emptyset)+\mu^v(\{i\})$. If the set functions to be dealt with are meant to quantify the cluster score of vertex subsets, which is null for the empty set, then $v(\{i\})=\mu^v(\{i\})$ for all $i\in N$ as well as $\mu^v(\{i,j\})=v(\{i,j\})-v(\{i\})-v(\{j\})$ for all $\{i,j\}\in N_2$.

Geometrically, Boolean lattice $(2^N,\cap,\cup)$ is often dealt with as outlined above, namely as the set $\{0,1\}^n$ of extreme points of the $n$-cube $[0,1]^n$, in that subsets
$A\in 2^N$ correspond to their characteristic function $\chi_A:N\rightarrow\{0,1\}$, i.e. $\chi_A=(\chi_A(1),\ldots ,\chi_A(n))\in\{0,1\}^n$, where $\chi_A(i)=1$ if $i\in A$
and $\chi_A(j)=0$ if $j\in A^c$. In this view, set functions $v:2^N\rightarrow\mathbb R$ are pseudo-Boolean functions $\hat f^v:\{0,1\}^n\rightarrow\mathbb R$ \cite{BorosHammer02},
with the following polynomial MLE (multilinear extension) $f^v:[0,1]^n\rightarrow\mathbb R$ over the whole $n$-cube:
\begin{equation*}
f^v(q)=\sum_{A\in 2^N}\left(\prod_{i\in A}q_i\right)\mu^v(A)\text{, where }\prod_{i\in\emptyset}q_i:=1
\end{equation*}
and $q=(q_1,\ldots ,q_n)\in[0,1]^n$ is any fuzzy subset of $N$. This is an extension in that
$f^v(\chi_A)=\sum_{B\subseteq A}\mu^v(B)=v(A)$ at all extreme points $\chi_A\in\{0,1\}^n$. Polynomial $f^v(q)$ is multilinear in $n$ variables $q_1,\ldots,q_n$, with degree
$\max\{|A|:\mu^v(A)\neq 0\}$ and coefficients given by the non-zero values $\mu^v(A)\neq 0$ of M\"obius inversion. If $\mu^v(A)=0$ for all $A\in 2^N,|A|>1$, then $f^v$ is
\textit{linear} (and $v$ is a \textit{valuation} \cite{Aigner97} of Boolean lattice $(2^N,\cap,\cup)$, i.e. $v(A\cap B)+v(A\cup B)=v(A)+v(B)$ for all $A,B\in 2^N$).
Similarly, if $\mu^v(A)=0$ for $A\in 2^N,|A|>2$, then $f^v$ is \textit{quadratic}. If $\mu^v(\emptyset)=0$, then a linear $f^v$ satisfies $v(A)=\sum_{i\in A}\mu^v(\{i\})$, while a quadratic $f^v$ satisfies

$v(A)=\sum_{i\in A}\mu^v(\{i\})+\sum_{\{i,j\}\subseteq A}\mu^v(\{i,j\})$.

The MLE $f^{v_{\mathcal Q}}$ of cluster score function $v_{\mathcal Q}$ defining modularity $\mathcal Q(P)=\sum_{A\in P}v_{\mathcal Q}(A)$ is quadratic, with coefficients
$\mu^{v_{\mathcal Q}}(\{i\})=v_{\mathcal Q}(\{i\})=-[d_i/(2|E|)]^2$ for singletons $\{i\}$ and $\mu^{v_{\mathcal Q}}(\{i,j\})=\left[a_{ij}-d_id_j/(2|E|)\right]/|E|$ for pairs
$\{i,j\}$. Concerning conditions $(a)-(b)$ above, let $v_{\mathcal Q}'$ be an alternative cluster score function with quadratic MLE satisfying
$v_{\mathcal Q}'(\{i\})=\sum_{j\in N}v_{\mathcal Q}(\{j\})/n$ for all vertices $i\in N$. This means that in $v_{\mathcal Q}$ every vertex has its own
score when considered as a singleton cluster, while in $v_{\mathcal Q}'$ all vertices score the same as singletons. However, condition $(a)$ holds since
$\sum_{i\in N}v_{\mathcal Q}'(\{i\})=\sum_{i\in N}-d_i^2/(4|E|^2)=\sum_{i\in N}v_{\mathcal Q}(\{i\})$. By setting $\mu^{v_{\mathcal Q}'}(\{i,j\})=v_Q(\{i,j\})-\mu^{v'_{\mathcal Q}}(\{i\})-\mu^{v'_{\mathcal Q}}(\{j\})$ for
pairs and $\mu^{v_{\mathcal Q}'}(A)=0=\mu^{v_{\mathcal Q}}(A)$ for all subsets $A,1\neq|A|\neq2$, condition $(b)$ holds too, since the \textit{net added} score $\mu^{v_{\mathcal Q}'}(\{i,j\})=v_Q(\{i,j\})-\mu^{v'_{\mathcal Q}}(\{i\})-\mu^{v'_{\mathcal Q}}(\{j\})$ of pairs over singletons yields $v_Q'(\{i,j\})=v_Q(\{i,j\})$, and thus
$\sum_{A\in P}v_{\mathcal Q}'(A)=\mathcal Q(P)=\sum_{A\in P}v_{\mathcal Q}(A)$ for all $P\in\mathcal P^N$.

\section{Fuzzy clustering}
Denote by $2^N_i=\{A:i\in A\in 2^N\}$ the $2^{n-1}$-set consisting of all subsets where each $i\in N$ is included, and by $\Delta_i$ the associated
$2^{n-1}-1$-dimensional unit simplex whose extreme points are indexed by these subsets $A\in 2^N_i$. Slightly modifying the notation introduced in Section III, from now on let
$q_i\in\Delta_i$ be a generic membership distribution, with $q_i^A\in[0,1]$ quantifying $i$'s membership in cluster $A$. That is to say, $q_i:2^N_i\rightarrow[0,1]$ with
$q_i(A)=q_i^A$ and $\sum_{A\in2^N_i}q_i^A=1$. 
\begin{definition}
A fuzzy cover $\textbf q=\{q^A:A\in 2^N\}$ is a collection of $2^n$ fuzzy clusters $q^A=(q^A_1,\ldots,q^A_n)\in[0,1]^n$, where $q_i^A\in[0,1]$ if $i\in A$ and
$q^A_j=0$ if $j\in A^c$, while $\sum_{A\in 2^N_i}q^A_i=1$ for all $i\in N$.
\end{definition}
Apart from zero entries, fuzzy covers \textbf q thus essentially correspond to $n$-tuples $(q_1,\ldots,q_n)\in\underset{i\in N}{\times}\Delta_i$ of
membership distributions \cite{LancichinettiEtAl2009,NepuszEtAl2008}. The originality of the present contribution develops from evaluating fuzzy network modules through
the MLE $f^v$ of an underling cluster score function $v$, entailing that fuzzy covers $\textbf q=\{q^A:A\in 2^N\}$ attain additive global score $F^V(\textbf q)$ given by the sum
of the
$2^n$ values taken by $f^v$, i.e.
\begin{equation}
F^V(\textbf q)=\sum_{A\in 2^N}f^v(q^A)=\sum_{A\in 2^N}\sum_{B\supseteq A}\left(\prod_{i\in A}q_i^B\right)\mu^v(A)\text .
\end{equation}
In pseudo-Boolean optimization \cite{BorosHammer02}, the goal is to minimize or maximize a pseudo-Boolean function $\hat f^w:\{0,1\}^n\rightarrow\mathbb R$, i.e.
a set function $v:2^N\rightarrow\mathbb R$, with MLE $f^v:[0,1]^n\rightarrow\mathbb R$ thus allowing to turn several discrete optimization problems into a
continuous setting. In near-Boolean optimization \cite{Rossi2017}, the objective function has the form of $F^V(\textbf q)$ above, and the MLE
allows to deal with discrete optimization problems involving additive partition functions (namely maximum-weight set partitioning/packing) into a continuous setting. 
\begin{definition}
A fuzzy clustering is a fuzzy cover $\textbf q$ satisfying $|\{i:q^A_i>0\}|\in\{0,|A|\}$ for all $A\in 2^N$.
\end{definition}
Thus in a fuzzy clustering (or fuzzy partition), for all $A\in 2^N$, the number of those $i\in A$ with strictly positive membership $q^A_i>0$ is either $0$ or else $|A|$. As
shown below, the set of values taken by $F^V$ on fuzzy covers coincides with the set of values taken (solely) on fuzzy clusterings. 
\begin{proposition}
For any set function $v$, the range of $F^V$ is saturated by the values taken on fuzzy clusterings.
\end{proposition}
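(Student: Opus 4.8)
The plan is to prove $F^V(\text{fuzzy covers})=F^V(\text{fuzzy clusterings})$ (the inclusion $\supseteq$ being trivial, since every fuzzy clustering is a fuzzy cover) by combining an extremal--point reduction with a connectedness argument. Write $q_i=(q^A_i)_{A\in 2^N_i}\in\Delta_i$ for the membership distribution of a single vertex $i$. The first, routine, observation is that $F^V$ is \emph{affine} in $q_i$ once the other $n-1$ distributions are frozen: in $F^V(\mathbf{q})=\sum_{A\in 2^N}\sum_{B\subseteq A}\big(\prod_{k\in B}q^A_k\big)\mu^v(B)$ a coordinate $q^A_i$ occurs only in monomials indexed by pairs $B\subseteq A$ with $i\in B$, there to the first power and multiplied by coordinates $q^A_k$ with $k\neq i$, while two distinct such coordinates $q^A_i,q^{A'}_i$ (same $i$, $A\neq A'$) never share a monomial. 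Hence on each simplex $\Delta_i$ (other distributions fixed) $F^V$ is affine, so iterating over $i$ --- an affine function on $\Delta_i$ attains its extremes at vertices, and the already--chosen vertices may be kept fixed --- shows that both $\min F^V$ and $\max F^V$ over the polytope $\underset{i\in N}{\times}\Delta_i$ of fuzzy covers are attained at \emph{crisp} covers (all memberships in $\{0,1\}$).

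Second, I would show a crisp cover always realises the same $F^V$--value as a partition. A crisp cover is the data of a choice $A(i)\ni i$ for every $i\in N$; then $q^A=\chi_{S_A}$ with $S_A=\{i\in A:A(i)=A\}\subseteq A$, so by the extension property $f^v(q^A)=v(S_A)$ and $F^V(\mathbf{q})=\sum_{A\in 2^N}v(S_A)$. Each $i$ lies in exactly one $S_A$, so the nonempty sets among $\{S_A\}_{A\in 2^N}$ form a partition $P$ of $N$, the remaining $2^n-|P|$ of them being empty; thus $F^V(\mathbf{q})=\sum_{C\in P}v(C)+(2^n-|P|)v(\emptyset)$, which is precisely the value $F^V$ takes at the (crisp) fuzzy clustering in which every vertex places its unit membership on its own block of $P$. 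Combined with the first step, this shows $\min F^V$ and $\max F^V$ over all fuzzy covers are attained at fuzzy clusterings.

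Third, I would prove the set $\mathcal{FC}$ of fuzzy clusterings is path--connected. Given $\mathbf{q}\in\mathcal{FC}$, set, for $t\in[0,1]$, $q^A_i(t)=t\,q^A_i$ whenever $|A|\geq 2$ and $q^{\{i\}}_i(t)=1-t\sum_{A\ni i,\,|A|\geq 2}q^A_i$. Every $\mathbf{q}(t)$ is a fuzzy cover, and for $|A|\geq 2$ its support $\{i\in A:q^A_i(t)>0\}$ equals $\{i\in A:q^A_i>0\}\in\{\emptyset,A\}$ for $t>0$ and equals $\emptyset$ for $t=0$, while singleton supports stay full; so the path lies in $\mathcal{FC}$ and joins $\mathbf{q}$ (at $t=1$) to the bottom partition $P_\bot$ (at $t=0$). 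Therefore $F^V(\mathcal{FC})$, the continuous image of a path--connected set, is an interval; by the second step it contains both $\min F^V$ and $\max F^V$, hence it equals $[\min F^V,\max F^V]$, which is exactly the range of $F^V$ over all fuzzy covers (the continuous image of the connected polytope $\underset{i\in N}{\times}\Delta_i$). This yields the asserted equality.

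The step I expect to be the genuine obstacle is exactly the detour through partitions and path--connectedness, i.e. recognising that one cannot, in general, turn a fuzzy cover into a fuzzy clustering by a single value--preserving local redistribution of membership: withdrawing the mass from a cluster $A$ whose support $S$ is a proper nonempty subset of $A$ changes $f^v(q^A)=\sum_{B\subseteq S}\big(\prod_{k\in B}q^A_k\big)\mu^v(B)$ polynomially (not affinely) in the transfer parameter, with no canonical compensating move elsewhere. The extremal--point reduction together with the path--connectedness of $\mathcal{FC}$ sidesteps this by letting every intermediate value be hit without committing to a particular local rearrangement.
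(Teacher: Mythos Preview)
Your argument is correct and takes a genuinely different route from the paper. The paper proceeds constructively: given a fuzzy cover $\mathbf q$, it picks a $\supseteq$-minimal $A$ with proper nonempty support $A^+_{\mathbf q}\subset A$, withdraws the mass from $A$, and redistributes it over the subsets $B\subseteq A^+_{\mathbf q}$ by solving the system $\sum_{B\ni i}\hat q_i^B=q_i^A+\sum_{B\ni i}q_i^B$ (for $i\in A^+_{\mathbf q}$) together with $\prod_{i\in B}\hat q_i^B=\prod_{i\in B}q_i^B+\prod_{i\in B}q_i^A$ (for $|B|>1$), arguing by an equation/variable count that solutions exist; iterating yields a fuzzy clustering $\hat{\mathbf q}^*$ with $F^V(\hat{\mathbf q}^*)=F^V(\mathbf q)$. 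Your proof replaces this local redistribution by a global, topological argument: affineness of $F^V$ in each $q_i$ forces the extrema over $\times_i\Delta_i$ onto crisp covers; a crisp cover has the same value as the partition formed by its nonempty supports; and path-connectedness of the set of fuzzy clusterings (via your homotopy to $P_\bot$) then squeezes $F^V(\mathcal{FC})$ to be the full interval $[\min F^V,\max F^V]$.

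What each buys: the paper's approach is pointwise and constructive---for every fuzzy cover it names (modulo the nonlinear system) a matching fuzzy clustering---which is closer in spirit to the algorithmic use made of the result afterwards. Your approach is cleaner and fully rigorous without having to verify that the nonlinear system admits solutions in $[0,1]$ (the paper only counts equations vs.\ variables and exhibits the case $\alpha=2$), but it is non-constructive: it guarantees that some fuzzy clustering hits each value without producing it from the given cover. Your closing remark that ``one cannot, in general'' do the direct redistribution is slightly overstated---the paper does exactly that, albeit non-canonically---but your detour is a perfectly valid and arguably more transparent way to reach the same conclusion.
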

\begin{proof}
In a fuzzy cover $\textbf q=\{q^A:A\in 2^N\}$, for some ($\supseteq$-minimal) $A\in 2^N$, let $A^+_{\textbf q}=\{i:q_i^A>0\}$ satisfy\smallskip\\
$\emptyset\subset A^+_{\textbf q}\subset A$, i.e. $0<|A^+_{\textbf q}|=\alpha<|A|$. Then,

$F^V(\textbf q)=\sum_{B\in2^{A^+_{\textbf q}}}f^v(q^B)+\sum_{A'\in 2^N\backslash 2^{A^+_{\textbf q}}}f^v(q^{A'})$,\smallskip\\
where $2^A=\{B:B\subseteq A\}$ for all $A\in 2^N$. In particular,
\begin{equation*}
f^v(q^A)=\sum_{B\in2^{A^+_{\textbf q}}}\left(\prod_{i\in A^+_{\textbf q}}q_i^A\right)\mu^v(B)\text .
\end{equation*}
Consider another fuzzy cover $\hat{\textbf q}$ such that $\hat q^{A'}=q^{A'}$ for all $A'\in2^N\backslash2^{A^+_{\textbf q}}$, while $\hat q^A_i=0$ for all
$i\in A$, with group membership $q^A_A=\sum_{i\in A}q_i^A=\sum_{i\in A^+_{\textbf q}}q_i^A$ redistributed over subsets $B\in2^{A^+_{\textbf q}}$ according to:
\begin{equation*}
\sum_{B\in(2^N_i\cap 2^{A^+_{\textbf q}})}\hat q_i^B=q_i^A+\sum_{B\in(2^N_i\cap 2^{A^+_{\textbf q}})}q_i^B\text{ for all }i\in A^+_{\textbf q}\text ,
\end{equation*}
\begin{equation*}
\prod_{i\in B}\hat q_i^B=\prod_{i\in B}q_i^B+\prod_{i\in B}q_i^A\text{ for all }B\in 2^{A^+_{\textbf q}},|B|>1\text .
\end{equation*}
These $2^\alpha-1$ equations with $\sum_{1\leq k\leq\alpha}k\binom{\alpha}{k}>2^{\alpha}$ variables $\hat q_i^B,\emptyset\neq B\in2^{A^+_{\textbf q}}$ admit a
continuum of solutions or fuzzy covers $\hat{\textbf q}$ where $\sum_{B\in 2^{A^+_{\textbf q}}}f^v(\hat q^B)=$

$=f^v(q^A)+\sum_{B\in 2^{A^+_{\textbf q}}}f^v(q^B)\Rightarrow F^V(\textbf q)=F^V(\hat{\textbf q})$.\smallskip\\
When reiterated for all (if any) $A'\in2^N\backslash2^{A^+_{\textbf q}}$ where inequalities $0<|\{i:q_i^{A'}>0\}|<|A'|$ both hold, this procedure yields a final fuzzy clustering
$\hat{\textbf q}^*$ satisfying $F^V(\textbf q)=F^V(\hat{\textbf q}^*)$.
\end{proof}

\begin{example}
Let $A=\{1,2,\ldots\}\supset A^+_{\textbf q}=\{1,2\}$, hence

$f^v(q^A)=q_1^A\mu^v(\{1\})+q_2^A\mu^v(\{2\})+q_1^Aq_2^A\mu^v(\{1,2\})$,\smallskip\\
with the following three conditions for $\hat{\textbf q}$
\begin{itemize}
\item $\hat q_1^{\{1,2\}}+\hat q_1^{\{1\}}=q_1^{\{1,2\}}+q_1^{\{1\}}+q_1^A$,
\item $\hat q_2^{\{1,2\}}+\hat q_2^{\{2\}}=q_2^{\{1,2\}}+q_2^{\{2\}}+q_2^A$,
\item $\hat q_1^{\{1,2\}}\hat q_2^{\{1,2\}}=q_1^{\{1,2\}}q_2^{\{1,2\}}+q_1^Aq_2^A$,
\end{itemize}
and four variables $\hat q_1^{\{1\}},\hat q_1^{\{1,2\}},\hat q_2^{\{2\}},\hat q_2^{\{1,2\}}$. One solution is

$\hat q_1^{\{1,2\}}=\hat q_2^{\{1,2\}}=\sqrt{q_1^{\{1,2\}}q_2^{\{1,2\}}+q_1^Aq_2^A}>0$,

$\hat q_1^{\{1\}}=q_1^{\{1,2\}}+q_1^{\{1\}}+q_1^A-\sqrt{q_1^{\{1,2\}}q_2^{\{1,2\}}+q_1^Aq_2^A}>0$,

$\hat q_2^{\{2\}}=q_2^{\{1,2\}}+q_2^{\{2\}}+q_2^A-\sqrt{q_1^{\{1,2\}}q_2^{\{1,2\}}+q_1^Aq_2^A}>0$.
\end{example}

A main advantage of fuzzy clusters over hard ones is that they may display non-empty pair-wise intesections while also maintaining a unit (cumulative) membership
that every $i\in N$ distributes over $2^N_i$ \cite{FuzzyPotts2004,XieEtAl2012,ZhangWangZhang2007}. Yet, if fuzzy clusterings are evaluated via MLE as in
expression (1), then they cannot score better than hard ones or partitions $P=\{A_1,\ldots ,A_{|P|}\}$, where these latter correspond to
$2^n$-collections
$\textbf p=\{p^A:A\in 2^N\}$ defined by $p^A=\left\{\begin{array}{c}\chi_A\text{ if }A\in P\\\textbf 0\text{ if }A\in2^N\backslash P\end{array}\right .$,
with $\textbf 0\in\{0,1\}^n$ denoting the all-zero $n$-vector. Apart from zero entries, \textbf p coincides with the collection
$(\chi_{A_1},\ldots,\chi_{A_{|P|}})$ of the characteristic functions of $P$'s blocks, which are pair-wise disjoint extreme points of the $n$-cube, i.e.
$\langle\chi_{A_l},\chi_{A_k}\rangle=0$ for $1\leq l<k\leq|P|$, satisfying $\chi_1+\cdots+\chi_{A_{|P|}}=\chi_N=\textbf 1$, where $\langle\cdot,\cdot\rangle$
denotes scalar product and $\textbf 1\in\{0,1\}^n$ is the all-one $n$-vector. Expression (1) evaluates partitions $P\in\mathcal P^N$ as these
collections $\textbf p\subset\{0,1\}^n$ of disjoint extreme points of the $n$-cube:
$F^V(\textbf p)=\sum_{A\in 2^N}f^v(p^A)=\sum_{A\in P}f^v(\chi_A)=\sum_{A\in P}v(A)$.
\begin{proposition}
For any fuzzy clustering \textbf q and set function $v$, some partitions $P,P'$ satisfy
$F^V(\textbf p)\geq F^V(\textbf q)\geq F^V(\textbf p')$.
\end{proposition}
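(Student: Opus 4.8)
The plan is to exploit the multilinear form of $F^V$ in expression~(1) to ``round'' the fuzzy clustering $\textbf q$ to a partition one vertex at a time, never decreasing the objective in order to reach $P$ and never increasing it in order to reach $P'$. The decisive preliminary observation is that, although the polynomial $F^V$ has degree larger than $1$ in general, each single membership coordinate $q_i^A$ occurs in it \emph{linearly}, and only inside the one summand $f^v(q^A)$: indeed $f^v(q^{A'})$ involves only the coordinates $q_j^{A'}$ with $j\in A'$, so it is free of every $q_i^{\cdot}$ when $i\notin A'$, while $f^v(q^A)=q_i^A\left(\sum_{B\colon i\in B\subseteq A}\prod_{j\in B\setminus i}q_j^A\,\mu^v(B)\right)+(\text{terms not involving }q_i^A)$. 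Hence, if all membership distributions $q_j$ with $j\neq i$ are held fixed, then $q_i\mapsto F^V(\textbf q)$ is affine on the simplex $\Delta_i$, and thus attains both its maximum and its minimum at extreme points of $\Delta_i$, i.e. at distributions concentrating $i$'s whole unit membership on a single cluster $A\ni i$.

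First I would iterate this over vertices. Starting from $\textbf q$, replace $q_1$ by an extreme point of $\Delta_1$ that maximizes the (affine) restriction, obtaining a fuzzy cover $\textbf q^{(1)}$ with $F^V(\textbf q^{(1)})\geq F^V(\textbf q)$; then freeze all distributions except $q_2$ and repeat, and so on through vertex $n$. The outcome $\textbf q^*:=\textbf q^{(n)}$ is a ``pure'' collection: there is a choice function $i\mapsto A(i)\ni i$ with $q_i^A=1$ iff $A=A(i)$, and $F^V(\textbf q^*)\geq F^V(\textbf q)$. Running the same scheme with ``minimizes'' in place of ``maximizes'' produces a pure collection $\textbf q'$ with $F^V(\textbf q')\leq F^V(\textbf q)$.

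Then I would evaluate $F^V$ at a pure collection. Writing $B_A:=\{i\in N:A(i)=A\}\subseteq A$, we have $q^A=\chi_{B_A}$, hence $f^v(q^A)=v(B_A)$ and $F^V(\textbf q^*)=\sum_{A\in 2^N}v(B_A)$. The nonempty sets among the $B_A$ are pairwise disjoint with union $N$ (each $i$ lies in exactly one $B_{A(i)}$), and $A\mapsto B_A$ is injective on $\{A:B_A\neq\emptyset\}$; thus they constitute a partition $P\in\mathcal P^N$, and, since $v(\emptyset)=0$, $\sum_{A\in 2^N}v(B_A)=\sum_{B\in P}v(B)=F^V(\textbf p)$, the last identity being exactly the evaluation $F^V(\textbf p)=\sum_{A\in P}v(A)$ recorded just above. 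The same computation applied to $\textbf q'$ yields a partition $P'$ with $F^V(\textbf p')=F^V(\textbf q')$, and chaining the inequalities gives $F^V(\textbf p)\geq F^V(\textbf q)\geq F^V(\textbf p')$.

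I expect the only delicate point --- a bookkeeping subtlety rather than a genuine obstacle --- to be the linearity claim of the first paragraph: one must look past the global nonlinearity of $F^V$ and check both that each coordinate $q_i^A$ enters $F^V$ only to the first power and that it enters no summand of~(1) besides $f^v(q^A)$, so that sweeping $q_i$ across $\Delta_i$ really is an affine motion. Everything after that is routine: the $\Delta_i$ are precisely the simplex factors whose product carries the fuzzy covers, an extreme point of $\Delta_i$ is a one-hot membership, and a pure collection reassembles into a partition as shown. Note that the argument uses nothing about $\textbf q$ being a fuzzy clustering rather than an arbitrary fuzzy cover, in keeping with the earlier fact that $F^V$ takes the same set of values over both classes; restricting to fuzzy clusterings only matches the surrounding discussion.
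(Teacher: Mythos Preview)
Your argument is correct and shares its core with the paper's proof: both isolate the dependence of $F^V$ on a single membership distribution $q_i$ with the others frozen, observe that this dependence is affine (the paper writes it as the scalar product $F^V_i(q_i|\textbf q_{-i})=\langle q_i,v_{\textbf q_{-i}}\rangle$), and then iterate over $i\in N$ to push every $q_i$ to an extreme point of $\Delta_i$ without decreasing (respectively increasing) the objective. The difference lies only in the last step. The paper, having arrived at a fuzzy cover $\overline{\textbf q}$ with all $\overline q_i\in ex(\Delta_i)$, invokes Proposition~3 to pass to a fuzzy clustering of equal value and declares this to be the sought partition. You instead compute $F^V$ at the pure collection directly: writing $B_A=\{i:A(i)=A\}$ you get $q^A=\chi_{B_A}$, hence $f^v(q^A)=v(B_A)$, and the nonempty $B_A$'s form a partition $P$ with $F^V(\textbf q^*)=F^V(\textbf p)$. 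Your route is a touch more self-contained, since it bypasses Proposition~3 entirely and makes explicit why the rounded cover already \emph{evaluates} like a partition; the paper's route is more modular but leaves implicit the (true) fact that applying Proposition~3's redistribution to a cover with one-hot memberships actually lands on a hard partition. Your closing remark that the hypothesis ``fuzzy clustering'' is never used is also accurate and consistent with Proposition~3.
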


\begin{proof}
By isolating the contribution of membership $q_i$ to objective function $F^V(\textbf q)=F^V(q_i|\textbf q_{-i})$ when all other $n-1$ memberships
$q_j,j\neq i$ are given,
\begin{equation}
F^V(\textbf q)=F^V_i(q_i|\textbf q_{-i})+F^V_{-i}(\textbf q_{-i})\text ,
\end{equation}
where $F^V(\textbf q)=\sum_{A\in 2^N_i}f^v(q^A)+\sum_{A'\in 2^N\backslash 2^N_i}f^v(q^{A'})$,
\begin{equation*}
F^V_i(q_i|\textbf q_{-i})=\sum_{A\in 2^N_i}q^A_i\left[\sum_{B\subseteq A\backslash i}\left(\prod_{j\in B}q^A_j\right)\mu^v(B\cup i)\right]
\end{equation*}
\begin{equation*}
\text{and }\text{ }F^V_{-i}(\textbf q_{-i})=\sum_{A\in 2^N_i}\left[\sum_{B\subseteq A\backslash i}\left(\prod_{j\in B}q^A_j\right)\mu^v(B)\right]+
\end{equation*}
\begin{equation*}
+\sum_{A'\in 2^N\backslash 2^N_i}\left[\sum_{B'\subseteq A'}\left(\prod_{j'\in B'}q^{A'}_{j'}\right)\mu^v(B')\right]\text .
\end{equation*}
Define $v_{\textbf q_{-i}}:2^N_i\rightarrow\mathbb R$ by
\begin{equation}
v_{\textbf q_{-i}}(A)=\sum_{B\subseteq A\backslash i}\left(\prod_{j\in B}q^A_j\right)\mu^v(B\cup i)\text .
\end{equation}
Let $\mathbb A^+_{\textbf q_{-i}}=\arg\max v_{\textbf q_{-i}}$ and $\mathbb A^-_{\textbf q_{-i}}=\arg\min v_{\textbf q_{-i}}$,
where surely $\emptyset\subset\mathbb A^+_{\textbf q_{-i}},\mathbb A^-_{\textbf q_{-i}}\subseteq 2^N_i$. Most importantly,
\begin{equation}
F^V_i(q_i|\textbf q_{-i})=\sum_{A\in 2^N_i}\Big(q^A_i\cdot v_{\textbf q_{-i}}(A)\Big)=\langle q_i,v_{\textbf q_{-i}}\rangle\text .
\end{equation}
In words, for given membership distributions $q_j,j\neq i$, global score is affected by $i$'s membership distribution $q_i$ through a scalar product. In order to
maximize (or minimize) $F^V$ by suitably choosing $q_i$ for given $\textbf q_{-i}$, the whole of $i$'s membership mass has to be placed over
$\mathbb A^+_{\textbf q_{-i}}$ (or $\mathbb A^-_{\textbf q_{-i}}$), anyhow. Hence there are precisely $|\mathbb A^+_{\textbf q_{-i}}|>0$ (or
$|\mathbb A^-_{\textbf q_{-i}}|>0$) available extreme points of $\Delta_i$. After reiteration for all $i\in N$, the outcome shall generally consist of two fuzzy
covers $\overline{\textbf q}$ and $\underline{\textbf q}$ such that $F^V(\overline{\textbf q})\geq F^V(\textbf q)\geq F^V(\underline{\textbf q})$ as well as
$\overline q_i,\underline q_i\in ex(\Delta_i)$, where $ex(\Delta_i)$ is the $2^{n-1}$-set of extreme points of simplex $\Delta_i$. When this is combined with
Proposition 3, the desired conclusion follows.
\end{proof}

These findings suggest to search for optimal partitions through reiterated improvements of the objective function:
$F^V(\textbf q(t+1))>F^V(\textbf q(t)),t=0,1,\ldots$, while only requiring the cluster score function $v$ and an initial fuzzy clustering $\textbf q(0)$ as inputs. Before
considering how to search, two further topology-based quadratic cluster scores are defined hereafter.

\section{Weights and common neighbors}
Toward comparison with modularity, a first quadratic cluster score function obtains by focusing on weighted networks, where it seems natural to set the
score $v(\{i,j\})$ of pairs equal to edge weights $w_{ij}$. Since $\emptyset$ scores 0, the quadratic form then only requires to define the score $v(\{i\})$ of singletons.
Basically, the greater $w_i=\sum_{j\in N\backslash i}w_{ij}$, the smaller the score of $i$ as a singleton
cluster. To formalize this, attention is placed on the complement or dual edge set $\overline W\in[0,1]^{\binom{n}{2}}$, whose entries $\overline{w}_{ij}=1-w_{ij}$ measure a
repulsion between $i$ and $j$.\\ A second quadratic cluster score function is defined by focusing on the clustering coefficient (of non-weighted networks). A distinctive
feature of several complex networks (including social ones) is an high density of `triangles', namely triples of nodes any two of which are linked. Such a density is measured in
terms of a [0,1]-ranged ratio by the so-called clustering coefficient, which is in fact the probability that by picking at random a vertex and two of its neighbors these latter are
also neighbors of each other \cite{ClusteringCoefficientInfoSciences20017,NewmanSocialAreDifferent2003}. A cluster score function can thus be conceived to measure the density of
triangles \textit{locally}, namely in the subgraphs spanned by vertex subsets. To this end, the score of any pair of vertices shall be determined by counting their common neighbors
\cite{Ahn2010,XieEtAl2012}.

\subsection{Score of singletons and dual weights}
Given a weighted network $G=(N,W)$, consider the cluster score function $v$ defined by $v(\emptyset)=0$ and $v(\{i,j\})=w_{ij}$ for pairs, while for singletons
\begin{equation}
v(\{i\})=\sum_{j\in N\backslash i}\frac{1-w_{ij}}{2(n-1)}=\frac{n-1-w_i}{2(n-1)}
\end{equation}
as well as $\mu^v(A)=0$ for larger subsets $A,|A|>2$. Weight $w_{ij}$ and its dual $\overline{w}_{ij}=1-w_{ij}$ may measure
respectively an attraction and a repulsion, while these quadratic cluster scores load the former on pairs and the latter on singletons. In particular, $\overline{w}_{ij}$ is
equally shared between $i$ and $j$, and the score of each singleton is the arithmetic mean of its $n-1$ shares. Therefore, $v(\{i\})\in[0,\frac{1}{2}]$ attains the upper bound on
isolated vertices, namely those $i$ such that $w_i=0$, and the lower one on those $i$ such that $w_i=n-1$. The  values of M\"obius inversion on pairs are
\begin{equation*}
\mu^v(\{i,j\})=w_{ij}-2\frac{\overline{w}_{ij}}{2(n-1)}-\sum_{k\in N\backslash\{i,j\}}\frac{2-w_{ik}-w_{jk}}{2(n-1)}=
\end{equation*}
\begin{equation}
=\frac{w_i+w_j}{2(n-1)}-\overline{w}_{ij}=w_{ij}-\left(1-\frac{w_i+w_j}{2(n-1)}\right)\text .
\end{equation}

Cluster score function $v$, with M\"obius inversion $\mu^v$ taking non-zero values only on singletons and pairs according to expressions (5-6), may be compared with
modularity-based $v_{\mathcal Q}$ by focusing on simple graphs $G=(N,\chi_E),E\subseteq N_2$. With the notation introduced in Section II,
$v(\{i\})=\frac{n-1-d_i}{2(n-1)}\geq0$ and $v_{\mathcal Q}(\{i\})=\frac{-d_i}{4|E|^2}\leq0$ on singletons, while on pairs
$\mu^v(\{i,j\})=a_{ij}-1+\frac{d_i+d_j}{2(n-1)}$ and $\mu^{v_{\mathcal Q}}(\{i,j\})=\frac{a_{ij}}{|E|}-\frac{d_id_j}{2|E|^2}$. Hence if $a_{ij}=1$ 
then $\mu^v(\{i,j\}),\mu^{v_{\mathcal Q}}(\{i,j\})\geq 0$, while if $a_{ij}=0$ then $\mu^v(\{i,j\}),\mu^{v_{\mathcal Q}}(\{i,j\})\leq 0$. It may be noted that for singletons both
$\mu^v$ and $\mu^{v_{\mathcal Q}}$ assign maximum/minimum cluster score to vertices $i$ of lowest/highest degree $d_i$.

Denoting by $d_A=\sum_{i\in A}d_i$ the group degree for $A\in 2^N$, expressions (5-6) define the score of larger clusters by

$v(A)=\frac{|A|}{2}+\frac{d_A(|A|-2)}{2(n-1)}-\left(\binom{|A|}{2}-|E(A)|\right)$,\smallskip\\
where $E(A)=\{\{i,j\}:E\ni\{i,j\}\subseteq A\}$ is the edge set of the subgraph $G(A)=(A,E(A))$ spanned or induced\footnote{The terminology and notation used here are standard in
graph theory \cite{DiestelGraphTheory2010}.} by $A$. The first two summands are evidently rather rough, as they entail that many vertices of high degree constitute a valuable
cluster, independently from how many of them are adjacent. However, the third summand is precisely the number of edges that spanned subgraph $G(A)$ lacks with respect to the
complete one $K_A$. In particular, if $G(A)=K_A$ is complete and also a component (or maximal connected subgraph) of $G$, then $|E(A)|=\binom{|A|}{2}$ and $d_A=|A|(|A|-1)$.
Substituting,

$v(A)=\frac{|A|}{2}+\binom{|A|}{2}\frac{|A|-2}{n-1}$\smallskip\\
is thus the maximum that a $|A|$-subset of vertices may score in any network. Coming to partitions, if $G=K_N$ is the complete graph, then the additive partition function $V$
resulting from $v$ attains its unique maximum on the coarsest or top partition $P^{\top}$, where $V(P^{\top})=v(N)=\binom{n}{2}$. In the opposite case, if $G=(N,\emptyset)$ is the
empty graph, then additive global score $V$ attains its unique maximum on the finest or bottom partition $P_{\bot}$, where
$V(P_{\bot})=\sum_{i\in N}v(\{i\})=\frac{n}{2}$. In combinatorial theory, partitions $P=\{A_1,\ldots,A_{|P|}\}$ of $N$ correspond to those graphs
$G_P=K_{A_1}\cup\cdots\cup K_{A_{|P|}}$ on $N$ each of whose components is complete (partitions being elements of the so-called polygon matroid on the edges of the complete graph
of order $n$ \cite{Aigner97}). In terms of graph clustering problems, these $\mathcal B_n$ partition-like graphs clearly constitute the simplest conceivable instances, in that
global score attains its unique maximum $V(P)=\sum_{A\in P}\left(\frac{|A|}{2}+\binom{|A|}{2}\frac{|A|-2}{n-1}\right)$ precisely on the corresponding partition.

\subsection{Score of pairs and common neighbors}
The clustering coefficient of a network $G=(N,E)$ is
\begin{equation*}
cc(G)=\frac{3\times\text{ number of included cycles on 3 vertices}}{\text{number of included trees on 3 vertices}}\text ,
\end{equation*}
cycles and trees on 3 vertices \cite{DiestelGraphTheory2010} being also termed respectively `triangles' and `connected triples' \cite{NewmanSocialAreDifferent2003}.
Three vertices $i,j,k$ spanning a complete subgraph $G(\{i,j,k\})=K_{\{i,j,k\}}$, i.e. a cycle (of length 3), provide 3 trees, while if $G(\{i,j,k\})$ is connected but not complete then there
is only one tree. A disconnected $G(\{i,j,k\})$ evidently provides no tree. In network analysis $cc(G)$ is a key indicator measuring `transitivity', namely to what extent sharing
some common neighbor entails being adjacent, for any two vertices. While in social networks the clustering coefficient is higher than in non-social ones
\cite{NewmanSocialAreDifferent2003}, several complex networks display the same asymptotic clustering
coefficient as certain strongly regular graphs, in contrast to small-world networks \cite{BollobasScaleFree2003,ClusteringCoefficientInfoSciences20017}.

Now consider the aim to assign scores $v(A)$ to clusters $A$ in a way such that
higher values of the clustering coefficient $cc(G(A))$ over spanned subgraphs provide greater scores. A natural way to achieve this is by means of a \textit{cubic}
pseudo-Boolean function, i.e. such that $\mu^v(A)=0$ if $|A|>3$, which also seems to best interpret the model of random graphs with clustering where edges within trtriangles are
enumerated separately from other edges \cite{NewmanRandomTriangles2009}. Maintaining the
$\binom{n+1}{2}$ values on singletons and pairs as given by expressions (5-6) with weights $w_{ij}\in\{0,1\}$, on the $\binom{n}{3}$
3-subsets $\{i,j,k\}\in 2^N$ M\"obius inversion may be defined simply by $\mu^v(\{i,j,k\})=$\smallskip\\
$=\left\{\begin{array}{c}\beta\text{ if }G(\{i,j,k\})=K_{\{i,j,k\}}\text{ is complete,}\\
0\text{ if }G(\{i,j,k\})\text{ is connected but not complete,}\\
-\beta\text{ if }G(\{i,j,k\})\text{ is disconnected,}\end{array}\right.$\smallskip\\
with $\beta\in(0,1]$. The resulting cluster scores incorporate additional reward/penalty for proximity/distance to/from completeness of the spanned subgraph $G(A)$, and if
$G(A)=K_A$ is both complete and a component of $G$ then\smallskip\\
$\text{ }\text{ }v(A)=\frac{|A|}{2}+\binom{|A|}{2}\frac{|A|-2}{n-1}+\beta\binom{|A|}{3}$.\smallskip\\
However, the same target can be achieved by means of a quadratic $f^v$ where the count of both common and non-common neighbors for any two vertices determines the values taken by
M\"obius inversion $\mu^v$ on pairs. Consider the neighborhood of vertex $i$ in the given network $G=(N,E)$
\cite{Ahn2010,XieEtAl2012}, i.e. $\mathcal N_i=\{j:\{i,j\}\in E\}$.
Then, $|\mathcal N_i\cap\mathcal N_j|$ is the number of neighbors common to both $i$ and $j$,
while symmetric difference
$\mathcal N_i\Delta\mathcal N_j=(\mathcal N_i\backslash\mathcal N_j)\cup(\mathcal N_j\backslash\mathcal N_i)$ contains non-common neighbors. Quadratic scores can be defined by
$\mu^v(A)=0$ if $1\neq|A|\neq2$, while
on singletons and pairs
\begin{eqnarray}
\mu^v(\{i\})&=&\frac{1}{1+|\mathcal N_i|}\text{ }\text{ }\text{ }\text{ }\Big(=v(\{i\}\Big),\\
\mu^w(\{i,j\})&=&a_{ij}+\frac{|\mathcal N_i\cap\mathcal N_j|-|\mathcal N_i\Delta\mathcal N_j|}{|\mathcal N_i\cup\mathcal  N_j|}\text .
\end{eqnarray}
The resulting cluster score set function is
$$v(A)=|E(A)|+\sum_{i\in A}\frac{1}{1+d_i}+
\sum_{\{i,j\}\subseteq A}\frac{|\mathcal N_i\cap\mathcal N_j|-|\mathcal N_i\Delta \mathcal N_j|}{|\mathcal N_i\cup\mathcal  N_j|}\text ,$$
hence a spanned subgraph $G(A)=K_A$ which is complete and a component of $G$ scores $v(A)=(|A|-1)(|A|-2)+1$.

The remainder of this work focuses on searching for partitions $\textbf p\in\Delta_1\times\cdots\times\Delta_n$ that locally maximize objective function $F^V(\textbf q)$ in
expression (1), when the input consists of both: (i) $\binom{n+1}{2}$ values of M\"obius inversion $\mu^v$ on singletons and pairs, and (ii) an initial fuzzy clustering
$\textbf q(0)$. In particular, modularity-based $v_{\mathcal Q}$ together with these two cluster score functions defined respectively by expressions (5-6) and (7-8) shall be
furter compared as alternative inputs.

\section{Greedy search}
How to employ the results of Section IV for graph clustering may be detailed through comparison with the so-called greedy agglomerative approach to moduarity maximization
\cite{OnModularityClustering2007,NewmanFastAlgorithm2004}. Starting from the finest partition, this algorithm iteratively selects one union of two blocks that results in a maximal
increase of global score, thereby yielding a sequence $P(t+1)\gtrdot P(t)$ of partitions as the search path, where $P(0)=P_{\bot}$ and $\gtrdot$ denotes the covering relation
between partitions, i.e. $|P(t+1)|=|P(t)|-1$ and $P(t+1)>P(t)$ (see above). If there are tails, meaning that aletrnative unions of two blocks of $P(t)$ yield the same maximal
increase of global score, then the two blocks to be merged are randomly selected. The stopping criterion is the absence of any further improvement. The iterative procedure thus is
the following.\smallskip\\
\textsl{GreedyMerging}$(v,P)$\smallskip\\
\textsl{Initialize:} Set $t=0$ and $P(0)=P_{\bot}$.\smallskip\\
\textsl{Loop:} While $v(A\cup B)-v(A)-v(B)>0$ for some $A,B\in P(t)$, set $t=t+1$ and\smallskip\\
$[1]$ select (randomizing in case of tails) $A,B\in P(t-1)$ such that for all $A',B'\in P(t-1)$,
\begin{equation*}
v(A\cup B)-v(A)-v(B)\geq v(A'\cup B')-v(A')-v(B')\text ,
\end{equation*}
$[2]$ define $P(t)=\{A\cup B\}\cup(P(t-1)\backslash\{A,B\})$, i.e. obtain $P(t)$ from $P(t-1)$ by merging $A$ and $B$.\smallskip\\
\textsl{Output:} Set $P^*=P(t)$.

This algorithm has been used to maximize modularity in different types of networks \cite{NewmanFastAlgorithm2004}. In terms of combinatorial optimization, it is an heuristic
(for the NP-hard maximum-weight set partitioning problem), meaning that its worst-case output is not guaranteed to provide any bounded approximation of optimal global score. In
fact, for the $\frac{n}{2}$-regular graphs considered in \cite[Theorem 5.1]{OnModularityClustering2007}, \textsl{GreedyMerging} provides a worst-case solution $\hat P$
with zero modularity score $\mathcal Q(\hat P)=0$, while the optimal solution $P^*$ provides a strictly positive score $\mathcal Q(P^*)>0$. These graphs $G=(N,E)$ have
an even number $n>4$ of vertices, with two disjoint vertex subsets of equal size: $N^1=\{i_1,\ldots ,i_{\frac{n}{2}}\},N^2=\{j_1,\ldots ,j_{\frac{n}{2}}\}$. The edge set is
$E=E(K_{N^1})\cup E(K_{N^2})\cup\{\{i_k,j_k\}:1\leq k\leq\frac{n}{2}\}$, where $E(K_{N^1})=\{\{i,i'\}:\{i,i'\}\subset N^1\}$ and the same for $E(K_{N^2})$.
Hence $E$ includes the edges of complete graphs $K_{N^1}$ and $K_{N^2}$ together with all the $\frac{n}{2}$ edges with endpoints $i_k\in N^1$ and $j_k\in N^2$ for
$1\leq k\leq\frac{n}{2}$. Therefore, $|E|=2\binom{\frac{n}{2}}{2}+\frac{n}{2}=\frac{n^2}{4}$, with degree $d_i=\frac{n}{2}$ for all $i\in N$.
At $t=0$, for each of the $\binom{n}{2}$ unions of two blocks $\{i\},\{j\}\in P_{\bot}$, the corresponding variation $\mathcal Q(P(1))-\mathcal Q(P(0))$
of modularity equals $v_{\mathcal Q}(\{i,j\})-v_{\mathcal Q}(\{i\}-v_{\mathcal Q}(\{j\}=\mu^{v_{\mathcal Q}}(\{i,j\})=$\\
$=\frac{a_{ij}}{|E|}-\frac{d_id_j}{2|E|^2}=\left\{\begin{array}{c}2/n^2\text{ if }\{i,j\}\in E\text ,\\
-2/n^2\text{ if }\{i,j\}\in E^c\text .\end{array}\right.$ Hence the worst-case output is $\hat P=\{\{i_1,j_1\},\ldots ,\{i_{\frac{n}{2}},j_{\frac{n}{2}}\}\}$, i.e.
the partition obtained in $\frac{n}{2}$ iterations through unions $\{i_k\}\cup\{j_k\}$ for $1\leq k\leq\frac{n}{2}$, where modularity scores 0. Explicitely, $\mathcal Q(\hat P)=$\\
$=\sum_{i\in N}v_{\mathcal Q}(\{i\})+\sum_{1\leq k\leq\frac{n}{2}}\mu^{v_{\mathcal Q}}(\{i_k,j_k\})=$\\
$=-\sum_{i\in N}\frac{d_i^2}{4|E|^2}+\frac{n}{2}\frac{2}{n^2}=-\frac{1}{n}+\frac{1}{n}=0$. On the other hand, the unique maximum attains at $P^*=\{N^1,N^2\}$ where 
$\mathcal Q(P^*)=\sum_{i\in N}v_{\mathcal Q}(\{i\})+2\sum_{\{i,i'\}\subset N^1}\mu^{v_{\mathcal Q}}(\{i,i'\})=$\\
$=-\frac{1}{n}+2\binom{\frac{n}{2}}{2}\frac{2}{n^2}=\frac{n-4}{2n}>0$ as $n>4$.

For the other two quadratic scores defined in Section V, \textsl{GreedyMerging} provides the same worst-case output even when the input
cluster score function $v$ is that defined by expressions (5-6), in which case the $\binom{n}{2}$ unions of two blocks of $P_{\bot}$ result in a variation of global score equal to
$v(\{i,j\})-v(\{i\}-v(\{j\}=\mu^v(\{i,j\})=a_{ij}-1+\frac{d_i+d_j}{2(n-1)}=$\\ $=\left\{\begin{array}{c}\frac{n}{2(n-1)}\text{ if }\{i,j\}\in E\text ,\\
\frac{2-n}{2(n-1)}\text{ if }\{i,j\}\in E^c\text .\end{array}\right.$ 
But if the input is $v$ defined by expressions (7-8), then the algorithm surely finds the optimum $P^*$, as the $\binom{n}{2}$
unions of two blocks of $P_{\bot}$ result in variation $\mu^v(\{i,j\})=
a_{ij}+\frac{|\mathcal N_i\cap\mathcal N_j|-|\mathcal N_i\Delta\mathcal N_j|}{|\mathcal N_i\cup\mathcal N_j|}=$\\
$=\left\{\begin{array}{c}\frac{2(n-2)}{n+4}\text{ if }\{i,j\}\subset N^1\text{ or }\{i,j\}\subset N^2\text ,\\
\frac{2}{n}\text{ if }\{i,j\}\in E,i\in N^1,j\in N^2\text ,\\
\frac{6-n}{n-2}\text{ if }\{i,j\}\in E^c\text .\end{array}\right.$

Coming to fuzzy clusterings $\textbf q\in\Delta_1\times\cdots\times\Delta_n$, a quadratic $v$ reduces the objective function in expression (1) to $F^V(\textbf q)=$
\begin{equation}
=\sum_{i\in N}v(\{i\})+\sum_{\{i,j\}\in N_2}\left(\sum_{A\supseteq\{i,j\}}q_i^Aq_j^A\right)\mu^v(\{i,j\})\text .
\end{equation}
The main advantage of this MLE of additive partition functions $V$ is that it allows search paths to start and develop in the continuum $\Delta_1\times\cdots\times\Delta_n$,
although the output shall be a partition in view on Proposition 5. Designing a search strategy as a sequence $\textbf q(t)$ such that
$F^V(\textbf q(t+1))>F^V(\textbf q(t))$ amounts to formalize: an initial $\textbf q(0)$, how $\textbf q(t+1)$ obtains from the reached $\textbf q(t)$, and a stopping criterion.
By starting from an arbitrary input $\textbf q(0)$, the search is local, although the more the initial $n$ membership distributions
are each spread over $2^N_i$, the more it becomes global. The stopping criterion is determined by the definition of local optimality: if
$\mathcal N(\textbf q)=\underset{i\in N}{\cup}\{\hat q_i|\textbf q_{-i}:\hat q_i\in\Delta_i\}$ is the neighborhood of \textbf q, then $\textbf q^*$ is a local optimum
when $F^V(\textbf q^*)\geq F^V(\hat{\textbf q})$ for all $\hat{\textbf q}\in\mathcal N(\textbf q^*)$. In words, the neighborhood of \textbf q contains all
$n$-tuples of membership distributions where $n-1$ distributions are as in \textbf q while only one may vary, and $\textbf q^*$ is a local optimum if $F^V(\textbf q^*)$
is the greatest value taken by $F^V$ over $\mathcal N(\textbf q^*)$. It is shown below that for any partition $P$ a necessary and sufficient
condition for this local optimality is $v(A)\geq v(A\backslash i)+v(\{i\})$ for all $i\in A$ and all $A\in P$. A typical greedy local search would thus progress through a sequence
$\textbf q(t)$ such that $\textbf q(t+1)\in\mathcal N(\textbf q(t))$ and $F^V(\textbf q(t+1))-F^V(\textbf q(t))$ is maximal, but none of these two conditions is here maintained. In
fact, $\textbf q(t+1)\notin\mathcal N(\textbf q(t))$ as more than one of the $n$ membership distributions $(q_1(t),\ldots ,q_n(t))=\textbf q(t)$ shall vary within the same
$t$-th iteration, and rather than being applied directly to the increase $F^V(\textbf q(t+1))-F^V(\textbf q(t))$ of global score, greediness is applied to average derivatives,
defined hereafter.  

The $i$-th derivative \cite{BorosHammer02} of the MLE $f^v$ of $v$ at $x$ is $f^v_i(x)=$\smallskip\\
$=f^v(x_1,\ldots,x_{i-1},1,x_{i+1},\ldots,x_n)+$

$-f^v(x_1,\ldots,x_{i-1},0,x_{i+1},\ldots,x_n)=$
$$=\sum_{A\in 2_i^N}\left(\prod_{j\in A\backslash i}x_j\right)\mu^v(A)\text{, for } x=(x_1,\ldots,x_n)\in[0,1]^n\text .$$
At vertices $\chi_B,B\in 2^N$ of the $n$-cube it takes values $f^v_i(\chi_B)=\left\{\begin{array}{c}v(B)-v(B\backslash i)\text{ if }B\in 2^N_i\text{ , and}\\
v(B\cup i)-v(B)\text{ if }B\in2^N\backslash2^N_i\text .\end{array}\right.$
This derivative may be reproduced for the MLE $F^V$ of additive partition functions $V$ as follows. For all $i\in N$ and all $A\in2^N_i$, define membership distribution $q_{i_A}$
by $q_{i_A}^B=\left\{\begin{array}{c}1\text{ if }B=A\text ,\\ 0\text{ otherwise.}\end{array}\right.$ Also let $q_{i_{\emptyset}}^B=0$ for \textit{all}
$B\in2^N_i$, noting that $q_{i_{\emptyset}}$ is \textit{not} a membership distribution, as it places no membership over $2^N_i$ at all. Now define the $i_A$-derivative of $F^V$ at
\textbf q by $F^V_{i_A}(\textbf q)=$
\begin{equation*}
=F^V(q_{i_A}|\textbf q_{-i})-F^V(q_{i_{\emptyset}}|\textbf q_{-i})=F^V_i(q_{i_A}|\textbf q_{-i})=v_{\textbf q_{-i}}(A)\text ,
\end{equation*}
where the last two equalities obtain from expressions (2-3) in Section IV. Observe that if the $|A|-1$
membership distributions $q_j,j\in A\backslash i$ are $q_j^A=1$, then $F^V_{i_A}(\textbf q)=v(A)-v(A\backslash i)$, while
$F^V_{i_{\{i\}}}(\textbf q)=v(\{i\})$ independently from \textbf q. These $n2^{n-1}$ derivatives $(F^V_{i_A}(\textbf q(t)))_{i\in N,A\in 2^N_i}$ inform about
how to obtain $\textbf q(t+1)$ from the reached $\textbf q(t)$ in order to maximize the objective function. In particular, any greedy strategy requires
first to make clear what maximum distance may separate $\textbf q(t+1)$ from $\textbf q(t)$. Instead of $\textbf q(t+1)\in\mathcal N(\textbf q(t))$,
the rule maintained here is the same, \textit{mutatis mutandis}, as for \textsl{GreedyMerging}, namely that precisely one block is formed when transforming $\textbf q(t)$ into
$\textbf q(t+1)$. Hence $\sum_{i\in A}q^A_i(t)<|A|=\sum_{i\in A}q_i^A(t+1)$, or equivalently $q^A(t)\neq\chi_A=q^A(t+1)$, for exactly one $A$ at each $t$. Given this,
greediness is applied to \textit{average derivative}
$$\bar F^V_A(\textbf q)=\sum_{i\in A}\frac{v_{\textbf q_{-i}}(A)}{|A|}=
\sum_{B\subseteq A}\left[\sum_{i\in B}\left(\prod_{j\in B\backslash i}q^A_j\right)\right]\frac{\mu^v(B)}{|A|}\text .$$
In view of expression (9), for quadratic $f^v$ this reduces to 
\begin{equation*}
\bar F^V_A(\textbf q)=\frac{1}{|A|}\left[\sum_{i\in A}v(\{i\})+\sum_{\{i,j\}\subseteq A}\left(q_i^A+q_j^A\right)\mu^v(\{i,j\})\right]\text .
\end{equation*}
Thus the chosen $A$ (at iteration $t$, to be a block of the output partition $\textbf p^*$ being constructed) is one where $q^A(t)\neq\chi_A$ and
$\bar F^V_A(\textbf q(t))$ is maximal (dealing arbitrarily with tails). Then in the same iteration $t$ it is specified how to redistribute
membership $\sum_{B\in 2^N_j:B\cap A\neq\emptyset}q_j^B(t)$ over those $B\in 2^N_j$ such that $B\cap A=\emptyset$, for all $j\in A^c$.

Concerning the stopping criterion, the greedy procedure stops when $q^A(t)\in\{\textbf 0,\chi_A\}$, i.e. $\sum_{i\in A}q_i^A(t)\in\{0,|A|\}$, for all $A\in 2^N$. Thus, ignoring
zeros, $\textbf q(t)=\textbf p^*$ is a partition $P^*=\{A_1,\ldots,A_{|P^*|}\}$, dealt with in its Boolean representation $\textbf p^*=(\chi_{A_1},\ldots,\chi_{A_{|P^*|}})$. Next,
a second loop checks local optimality for this $P^*$, which attains if $v(A)\geq v(A\backslash i)+v(\{i\})$ for all $i\in A$ and all $A\in P^*$. If the inequality is not satisfied,
then the partition updates by splitting block $A$ in the two (new) blocks $A\backslash i$ and $\{i\}$. Finally, as for the starting point $\textbf q(0)$, there surely exist many
options, including a simplest (but computationally most demanding) one given by the $n$-tuple of uniform distributions $q_i^A(0)=2^{1-n}$ for all $A\in 2^N_i$ and all $i\in N$.
Broadly speaking, input $\textbf q(0)$ sets the terms of trade between computational burden and search width, as the more distributions $q_i(0),i\in N$ are each spread over
$2^N_i$, the more computationally demanding and wider becomes the search. Specifically, if a family $\mathcal F=\{A_1,\ldots,A_k\}\subset 2^N$ satisfies $q_i^B(0)=0$ for all
$B\in 2^N\backslash(2^{A_1}\cup\cdots\cup2^{A_k})$ and all $i\in N$ as well as $q^{A_l}_i(0)\neq 0$ for all $i\in A_l,1\leq l\leq k$, then the algorithm
proposed hereafter searches for optimal blocks only inside $2^{A_1}\cup\cdots\cup 2^{A_k}$, hence the output cannot be any partition $P$ such that $B\in P$ for some
$B\in2^N\backslash(2^{A_1}\cup\cdots\cup2^{A_k})$. In particular, if the input $\textbf q(0)=\textbf p$ is a partition $P=\{A_1,\ldots,A_{|P|}\}$, then the algorithm
only checks if $\textbf p$ is a local optimum, and updates if necessary. In the bottom case $\textbf q(0)=\textbf p_{\bot}$ the output $\textbf p^*=\textbf p_{\bot}$ coincides
with such an input (independently from input $\mu^v$). A seemingly general and flexible manner to choose the initial $n$ membership distributions is the following. Let
$\hat v(A)=\frac{v(A)}{|A|}$ and consider setting $\textbf q(0)$ via an arbitrary threshold $\theta\geq 0$ by:
\begin{equation}
q_i^A(0)=\left\{\begin{array}{c}0\text{ if }\hat v(A)\leq\theta\\ \hat v(A)\Big/\sum_{B\in 2^N_i:\hat v(B)>\theta}\hat v(B)\text{ otherwise}\end{array}\right.
\end{equation}
for all $i\in A$ and all $A\in 2^N$, entailing $\frac{q_i^A(0)}{q_i^B(0)}=\frac{\hat v(A)}{\hat v(B)}$ for all $i\in N$ and all $A,B\in2^N_i$ such that
$\hat v(A)>\theta<\hat v(B)$.

\subsection{Local search}
The greedy local-search strategy just described formally is:\smallskip\\
\textsl{GreedyClustering}$(w,\textbf q)$\smallskip\\
\textsl{Initialize:} Set $t=0$ and $\textbf q(0)$ as in expression (10).\smallskip\\
\textsl{GreedyLoop:} While $0<\sum_{i\in A}q^A_i(t)<|A|$ for some $A\in 2^N$, set $t=t+1$ and\smallskip\\
(a) select (randomizing in case of tails) one such $A=A^*(t)$ where for all\footnote{As usual colon ``:'' stands for ``such that''.} $B:0<\sum_{i\in B}q^B_j(t)<|B|$ average
derivative $\bar F^V_A$ satisfies $\bar F^V_A(\textbf q(t-1))\geq\bar F^V_B(\textbf q(t-1))$;\smallskip\\
(b) set $q_i^A(t)=
\left\{\begin{array}{c}1\text{ if }A=A^*(t)\\
0\text{ if }A\neq A^*(t)\end{array}\right.$ for all $i\in A^*(t),A\in 2^N_i$; \smallskip\\
(c) for all $j\in N\backslash A^*(t)$ and all $A\in 2^N_j:A\cap A^*(t)=\emptyset$, set $q^A_j(t)=q_j^A(t-1)+$
\begin{equation*}
+\left(\hat v(A)\sum_{\underset{B\cap A^*(t)\neq\emptyset}{B\in 2^N_j}}q_j^B(t-1)\right)
\left(\sum_{\underset{B'\cap A^*(t)=\emptyset}{B'\in 2^N_j}}\hat v(B')\right)^{-1}\text ;
\end{equation*}
(d) set $q^A_j(t)=0$ for all $j\in N\backslash A^*(t)$ and all $A\in 2^N_j:A\cap A^*(t)\neq\emptyset$.\smallskip\\
\textsl{CheckLoop:} While $q^A(t)=\chi_A$ and $v(A)<v(\{i\})+v(A\backslash i)$ for some $A\in 2^N,i\in A$, set $t=t+1$ and
\begin{eqnarray*}
q^{\hat A}_i(t)&=&\left\{\begin{array}{c}1\text{ if }|\hat A|=1\\
0\text{ otherwise}\end{array}\right .\text{ for all }\hat A\in 2^N_i\text ,\\
q^{B}_j(t)&=&\left\{\begin{array}{c}1\text{ if }B=A\backslash i\\
0\text{ otherwise}\end{array}\right .\text{ for all }j\in A\backslash i,B\in 2^N_j\text ,\\
q^{\hat B}_{j'}(t)&=&q^{\hat B}_{j'}(t-1)\text{ for all }j'\in A^c,\hat B\in 2^N_{j'}\text .
\end{eqnarray*}
\textsl{Output:} Set $\textbf p^*=\textbf q(t)$.

\begin{proposition}
The output $\textbf p^*$ of \textsl{GreedyClustering} is a local optimum: $F^V(\textbf p^*)\geq F^V(\textbf q)$ for all $\textbf q\in\mathcal N(\textbf p^*)$.
\end{proposition}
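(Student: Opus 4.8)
The plan is to reduce local optimality of $\textbf p^*$ to a one-coordinate maximality condition, evaluate that condition at a partition, and then match it against the inequality enforced in the \textsl{CheckLoop}. By expressions (2) and (4), for every $i\in N$ and every $\hat q_i\in\Delta_i$ one has $F^V(\hat q_i|\textbf q_{-i})=\langle\hat q_i,v_{\textbf q_{-i}}\rangle+F^V_{-i}(\textbf q_{-i})$, where the second summand does not depend on $\hat q_i$. Since $\mathcal N(\textbf p^*)$ is precisely the set of single-coordinate deviations from $\textbf p^*$, the clustering $\textbf p^*$ is a local optimum if and only if, for every $i\in N$, its $i$-th membership distribution $q^*_i$ maximizes the linear functional $\langle\,\cdot\,,v_{\textbf p^*_{-i}}\rangle$ over the simplex $\Delta_i$. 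A linear functional on $\Delta_i$ satisfies $\langle q_i,v_{\textbf p^*_{-i}}\rangle\leq\max_{A\in2^N_i}v_{\textbf p^*_{-i}}(A)$, with equality exactly when $q_i$ is supported on $\arg\max_{A\in2^N_i}v_{\textbf p^*_{-i}}(A)$; so it suffices to show that $q^*_i$ is supported there.

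The next step is to compute $v_{\textbf p^*_{-i}}$. By the \textsl{GreedyLoop} stopping condition $\sum_{i\in A}q^A_i(t)\in\{0,|A|\}$, together with nonnegativity and $\sum_{B\in2^N_i}q^B_i(t)=1$, the output $\textbf p^*$ is the Boolean representation of a partition $P^*$, so $q^B_j(\textbf p^*)=1$ when $B\in P^*$ and $j\in B$, and $q^B_j(\textbf p^*)=0$ otherwise. Fix $i\in N$ and let $A\in P^*$ be its block. Substituting into expression (3): for $B\in2^N_i$ with $B\neq A$ one has $B\notin P^*$, hence $q^B_j=0$ for every $j\in B\backslash i$, so every term with $C\neq\emptyset$ vanishes and $v_{\textbf p^*_{-i}}(B)=\mu^v(\{i\})=v(\{i\})$ (using $v(\emptyset)=0$); for $B=A$ all factors $q^A_j$ equal $1$, so $v_{\textbf p^*_{-i}}(A)=\sum_{C\subseteq A\backslash i}\mu^v(C\cup i)=\sum_{D\subseteq A:\,i\in D}\mu^v(D)=v(A)-v(A\backslash i)$ by M\"obius inversion. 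Thus $v_{\textbf p^*_{-i}}$ takes only the two values $v(A)-v(A\backslash i)$, on $A$, and $v(\{i\})$, on every other $B\in2^N_i$; since $q^*_i$ concentrates all of $i$'s unit mass on $A$, it is supported on $\arg\max_{B\in2^N_i}v_{\textbf p^*_{-i}}(B)$ exactly when $v(A)-v(A\backslash i)\geq v(\{i\})$. Combining with the first paragraph, $\textbf p^*$ is a local optimum if and only if $v(A)\geq v(A\backslash i)+v(\{i\})$ for every $A\in P^*$ and every $i\in A$.

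It remains to verify that the output satisfies this inequality. The \textsl{CheckLoop} keeps iterating exactly as long as some block $A\in P^*$ and some $i\in A$ violate $v(A)\geq v(\{i\})+v(A\backslash i)$, each iteration replacing such an $A$ by the two blocks $A\backslash i$ and $\{i\}$ and leaving all other blocks untouched (this is what its assignments do). Hence each iteration keeps $\textbf q(t)$ a partition, never splits a singleton (for $|A|=1$ the inequality $v(\{i\})\geq v(\{i\})+v(\emptyset)$ already holds, since $v(\emptyset)=0$), and strictly increases the number of blocks; as the number of blocks is bounded by $n$, the \textsl{CheckLoop} halts. Upon halting, $v(A)\geq v(\{i\})+v(A\backslash i)$ holds for every $A\in P^*$ and every $i\in A$, which is precisely the condition identified above.

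Putting the pieces together: for the output partition $\textbf p^*$ and every $i\in N$ with block $A\in P^*$, we have $v_{\textbf p^*_{-i}}(A)=v(A)-v(A\backslash i)\geq v(\{i\})=v_{\textbf p^*_{-i}}(B)$ for all $B\in2^N_i$, so $q^*_i$ maximizes $\langle\,\cdot\,,v_{\textbf p^*_{-i}}\rangle$ over $\Delta_i$; by the first paragraph this gives $F^V(\textbf p^*)\geq F^V(\hat q_i|\textbf p^*_{-i})$ for every $i\in N$ and every $\hat q_i\in\Delta_i$, i.e. $F^V(\textbf p^*)\geq F^V(\textbf q)$ for all $\textbf q\in\mathcal N(\textbf p^*)$. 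I expect the main obstacle to be the second step --- correctly evaluating $v_{\textbf p^*_{-i}}(B)$ --- where one must observe that at a partition the vanishing off-block memberships annihilate every M\"obius term of order $\geq2$ in expression (3), collapsing $v_{\textbf p^*_{-i}}$ onto just the two values above; this collapse is exactly what makes the single-vertex inequality checked by the \textsl{CheckLoop} both necessary and sufficient for local optimality, and it explains why checking only single-vertex splits of partitional blocks suffices.
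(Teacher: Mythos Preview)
Your proof is correct and follows essentially the same route as the paper's: both reduce local optimality at a partition to the single inequality $v(A)-v(A\backslash i)\geq v(\{i\})$ for each $i\in A\in P^*$, and then note that \textsl{CheckLoop} guarantees it. The paper computes the one-coordinate variation $F^V(q_i|\textbf p^*_{-i})-F^V(\textbf p^*)$ directly and simplifies it to $(q_i^A-1)\big(v(A)-v(A\backslash i)-v(\{i\})\big)$, whereas you reach the same conclusion via the scalar-product form $\langle q_i,v_{\textbf p^*_{-i}}\rangle$ from expressions (2)--(4) and the explicit evaluation of $v_{\textbf p^*_{-i}}$; your version also supplies a termination argument for \textsl{CheckLoop}, which the paper leaves implicit.
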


\begin{proof}
The case of singleton blocks $\{i\}$, if any, is trivial, in that $p_j^{*A}=0$ for all $j\neq i$ and all $A\in(2^N_j\cap2^N_i)$ entails
$F^V(q_i|\textbf p^*_{-i})=F^V(\textbf p^*)$ for any distribution $q_i\in\Delta_i$. Hence let $i\in A\in P^*$ with $|A|>1$. By switching from $p^*_i$
to $q_i\in\Delta_i$, global score variation is
$$F^V(q_i|\textbf p^*_{-i})-F^V(\textbf p^*)=v(\{i\})-v(A)+$$
$$+\left(q_i^A\sum_{B\in 2^A\backslash 2^{A\backslash i}:|B|>1}\mu^v(B)+\sum_{B'\in 2^{A\backslash i}}\mu^v(B')\right)=$$
$$=(q_i^A-1)\sum_{B\in 2^A\backslash 2^{A\backslash i}:|B|>1}\mu^v(B),$$
where the last equality is due to $$v(A)-v(A\backslash i)=\sum_{B\in 2^A\backslash2^{A\backslash i}}\mu^v(B).$$
Now assume that $F^V(q_i|\textbf p^*_{-i})-F^V(\textbf p^*)>0$, i.e. $\textbf p^*$ is not a local optimum. Since $q_i^A-1<0$ (because $q_i\neq p^*_i$), then
\begin{equation*}
\sum_{B\in 2^A\backslash 2^{A\backslash i}:|B|>1}\mu^v(B)=v(A)-v(A\backslash i)-v(\{i\})<0\text ,
\end{equation*}
and \textsl{CheckLoop} is meant precisely to screen out this.
\end{proof}

For the $\frac{n}{2}$-regular graphs seen above where in the worst-case \textsl{GreedyMerging} provides a null modularity score, it may be observed how
\textsl{GreedyClustering} surely (and immediately, in terms of the number of iterations) finds the unique optimum, given a reasonable input $\textbf q(0)$. Consider for
simplicity the initial $n$-tuple of uniform distributions, namely $q_i^A(0)=2^{1-n}$ for all $A\in 2^N_i,i\in N$. Then on every edge $\{i_k,j_k\}\in E$ where $i_k\in N^1$ and
$j_k\in N^2$ the average derivative takes value $\bar F^{\mathcal Q}_{\{i_k,j_k\}}(\textbf q(0))=\frac{1}{2}\left[-\frac{2}{n^2}+\frac{2}{2^{n-1}}\frac{2}{n^2}\right]=$\\
$=-\frac{1}{n^2}\left(1-\frac{1}{2^{n-2}}\right)<0$, while on every subset $A\subseteq N^1$ or $A\subseteq N^2$ its value is
$\bar F^{\mathcal Q}_{A}(\textbf q(0))=\frac{1}{|A|}\left[-\frac{|A|}{n^2}+\binom{|A|}{2}\frac{4}{n^22^{n-1}}\right]=$
$=-\frac{1}{n^2}\left(1-\frac{|A|-1}{2^{n-2}}\right)<0$, where $F^{\mathcal Q}$ is the MLE of additive partition function $\mathcal Q$ (i.e. modularity). Hence for $|A|=2$ there is
no difference, but $\bar F^{\mathcal Q}_A(\textbf q(0))$ increases with $|A|$ up to
\begin{equation*}
\bar F^{\mathcal Q}_{N^1}(\textbf q(0))=-\frac{1}{n^2}\left(1-\frac{\frac{n}{2}-1}{2^{n-2}}\right)=\bar F^{\mathcal Q}_{N^2}(\textbf q(0))\text ,
\end{equation*}
and $\frac{1}{2^{n-2}}<\frac{n-2}{2^{n-1}}$ (as $n>4$). Similar results obtain for the two quadratic scores defined by expressions (5-6) and (7-8). 

When $n$ is large, an input of uniform distributions clearly is not viable, in itself (consisting of $n2^{n-1}$ reals) and mostly because of the computational burden at
each iteartion. On the other hand, if the initial distributions place membership uniformly on pairs only, i.e.
$q_i^A(0)=\left\{\begin{array}{c}(n-1)^{-1}\text{ if }|A|=2\\ 0\text{ otherwise}\end{array}\right.$ for all $i\in N,A\in 2^N_i$, then the search cannot see that some triples
of vertices (namely those included in $N^1$ or in $N^2$) provide greater score. In fact, not only any two adjacent vertices provide the same score, but with this input
\textsl{GreedyClustering} is actually prevented from outputting any partition with (some) blocks larger than pairs. In other terms, the worst-case output of
\textsl{GreedyMerging} becomes certain. Also note that if scores are assigned according to expressions (7-8), then those $2\binom{\frac{n}{2}}{2}$ pairs included in $N^1$ or
in $N^2$ are more valuable than edges $\{i,j\},i\in N^1,j\in N^2$, because of common neighbors. However, with initial memberships distributed only on pairs
the algorithm remains constrained to partition the vertices into blocks no larger than pairs. These observations aim to highlight the crucial role played by locality
in the proposed search: if the initial distributions are too dispersed then the search is computationally impossible, but if they are too concentrated, especially over small
subsets, then the search space is too small. How to exploit this trade-off toward overlapping module detection is discussed hereafter.  

\subsection{Overlapping and multiple runs}
Local-search algorithms are commonly employed by varying the initial candidate solution over multiple runs. The idea is simple: eventually, among the resulting
multiple outputs a best one is chosen. Now, multiple outputs of \textsl{GreedyClustering}, with varying initial membership distributions, \textit{collectively} constitute a
family $\mathcal F\subset 2^N$ of overlapping vertex subsets. In other terms, the union of $k>1$ (optimal) partitions does provide the sought overlapping modular
structure. Formally, if $\textbf q_1(0),\ldots,\textbf q_k(0)$ are different initial fuzzy clusterings or inputs, with outputs $\textbf p^*_1,\ldots,\textbf p^*_k$ corresponding
to partitions $P^*_1,\ldots,P^*_k\in\mathcal P^N$, then $\mathcal F=P^*_1\cup\cdots\cup P^*_k$. In fact, the only case where $\mathcal F$ displays no overlapping is when these
outputs coincide, i.e. $P^*_l=P^*_{l+1},1\leq l<k$. Otherwise, their scores $V(P^*_l)=\sum_{A\in P^*_l}v(A)$ shall be generally different, but for overlapping module detection
those outputs with lower score may still be valuable, precisely because set function $v$ measures the score of subsets. That is to say, optimal partitions scoring lower
than others in terms of additive partition function $V$ may have some blocks scoring very high in terms of $v$. Therefore, the union of only some optimal partitions, namely
those scoring higher than some threshold, might exclude very important modules. These reasonings lead to see that $\mathcal F$ is in fact a \textit{weighted family}, with
weights $v(A)$ on its members $A\in\mathcal F$ quantified by $v$. 

Given its computational demand for generic initial membership distributions, \textsl{GreedyClustering} may run $k$ times only if each input $\textbf q_1(0),\ldots,\textbf q_k(0)$
is non-generic, i.e. with all $n$ memberships distributed only on small subsets. This severely restricts the search space by constraining the output partitions to only have small
blocks (see above). Then, family $\mathcal F$ only contains such small blocks, and thus important large modules might be excluded. On the other hand, in an overlapping structure
a large module seems likely to include some small ones. Accordingly, the search for large modules may be restricted by concentrating the initial $n$ membership distributions on
those vertex subsets given by the union of family members $A\in\mathcal F$. In particular, let $\Omega(\mathcal F)\subset 2^N$ be the set system\footnote{
$\Omega(\mathcal F)$ is a generaliziation of the field $2^P$ of subsets generated by partitions $P$, where
$2^{P_{\bot}}=2^N$, while $2^{P^{\top}}=\{\emptyset,N\}$.}
$\Omega(\mathcal F)=\{B:B=A_1\cup\cdots\cup A_m,\mathcal F\ni A_1,\ldots, A_m,m>0\}$. This is the collection of all (non-empty) subsets of $N$ resulting from the union of
some (i.e. at least one) family members $A\in\mathcal F$. As already explained, the search performed by \textsl{GreedyClustering} is top-down, as optimal blocks can only
be found among $\supseteq$-maximal subsets $A\in 2^N$ where vertices $i\in A$ initially place strictly positive membership. Weighted family $\mathcal F$ consisting precisely of
small modules, large ones may be detected by initially distributing memberships only on large subsets $B\in\Omega(\mathcal F)$. A simplest way to do this is uniformly over those
with size $|B|>\vartheta$ exceeding a threshold $\vartheta$, i.e.
$q_i^B(0)=\left\{\begin{array}{c}|2^N_{\Omega_i}|^{-1}\text{ if }|B|>\vartheta\\0\text{ otherwise}\end{array}\right.$ for all $B\in(2^N_i\cap\Omega(\mathcal F))$ and all $i\in N$,
where $2^N_{\Omega_i}$ denotes intesection $2^N_i\cap\{B:B\in\Omega(\mathcal F),|B|>\vartheta\}$. However, in the spirit of expression (10), weights or scores $v(B)$ of these
$B\in\Omega(N)$ with size exceeding $\vartheta$ may be used to determine more suitable non-uniform distributions. In any case, when initial memberships are (non-trivially)
distributed over some large subsets the search is computationally more demanding.

\section{Conclusion}
This work details how to search for network modules by means of a recent approach to objective function-based clustering and set partitioning \cite{Rossi_FCTA_2015,Rossi2017},
which applies to any graph clustering problem whose optimal solutions are extremizers of an additive partition function, namely a function assigning to every partition of
vertices the sum over blocks of their cluster score. This score of vertex subsets is quantified by a pseudo-Boolean (set) function, which in particular is quadratic when the score
of any subset is determined solely by the scores of included singletons and pairs. Although network topology is interpreted mostly in terms of alternative quadratic cluster scores,
still here the quadratic form is an option and not a constraint, as cubic cluster scores appear as well, aimed at incorporating the clustering coefficient of spanned subgraphs into
the objective function. Modularity being indeed an additive partition function with quadratic cluster scores, the whole setting is thorough detailed for the well-known case of
modularity maximization \cite{CommunitiesReviewNewmannSIAM2003,NewmanPNAS2006,BookBarabasiNewmann2006,LocalModularityClustering2011}.
%

The optimization-based search for network modules is conceived in the continuous space of fuzzy clusterings, because the objective function is in fact the polynomial multilinear
extension of additive partition functions. The extremizers are shown to be partitions of nodes, hence the proposed greedy local-search procedure outputs a graph partition.
In particular, a greedy loop generates blocks by maximizing the average derivative, where this latter parallels the standard derivative of pseudo-Boolean functions
\cite{BorosHammer02}, but most importantly an input fuzzy clustering where to start from makes the search local. The choice of this initialization is crucial, as it balances
between
computational burden and search width. Suitable inputs might allow for multiple runs, firstly searching for small modules only, and secondly for large ones only. Then, the
outputs are partitions, and their union is a set system of vertex subsets, namely an overlapping modular structure.

\subsection{Future work}
How to conceive benchmark graphs for testing overlapping module detection methods seems far from obvious. The comparison between probabilistic models and real-world complex systems
is at the heart of network analysis, and such models generally rely on maintaining randomness insofar as possible, while fixing quantities such as the degree sequence and the
clustering coefficient \cite{MassiveSyntheticNetworksArxiv2017,LancichinettiEtAl2008,CommunitiesReviewNewmannSIAM2003,NewmanRandomTriangles2009}. For random networks with `fixed'
modular
structure, a problem is that there is no precise quantity to fix, namely no measurement of modules in real-world networks. One way to deal with this is to fix a partition $P$ of
$N$ and
the corresponding
partition-like graph $G_P=(N,E_P)=\cup_{A\in P}K_A$ (each of whose components is the complete graph $K_A$ on a block $A\in P$, see Section V),
and next introduce `noise' by randomly adding some edges
$\{i',j'\}\in N_2\backslash E_P$ while removing some edges $\{i,j\}\in E_P$. Since blocks are non-overlapping, a random network $G$ with
\textit{overlapping} modules then may obtain by developing from the union of $k$ \textit{cliques}: $G=K_{A_1}\cup\cdots\cup K_{A_k}$. This means focusing on the \textit{clique-type}
$\kappa=(\kappa_1,\ldots,\kappa_n)$, i.e. the number $\kappa_m$ of maximal complete subgraphs on $m$ vertices, $1\leq m\leq n$. In fact, despite NP-hardness, recently such an information has become 
available even for very large networks \cite{CliqueEnumerationDistributedComputing2009,MaxCliqueEnum2017InfoProcLett}.
The $\frac{n}{2}$-regular graph $G$ considered in Section VI is the union  
$G=K_{N^1}\cup K_{N^2}\cup K_{\{i_1,j_1\}}\cup\cdots\cup K_{\{i_{\frac{n}{2}},j_{\frac{n}{2}}\}}$ of $\frac{n}{2}+2$ overlapping cliques, with clique-type $\kappa_2=\frac{n}{2},\kappa_{\frac{n}{2}}=2$ and
$\kappa_m=0$ for $2\neq m\neq\frac{n}{2}$. The clique-type $\kappa$ of graphs is a generalization of the type (or class) of partitions \cite{Rota64}, and if it is fixed,
then randomness essentially concerns the sizes of pair-wise intersections, i.e.
$|A\cap A'|$ for $K_A,K_{A'}\subset G$, which in turn
determine vertex degrees (and also seemingly provide the needed noise). More generally, a flexible probabilistic model might employ the clique-type observed in real-world networks as a parameter, rather than maintaining it fixed.

\bibliographystyle{abbrv}
\bibliography{biblioIJNN}

\end{document}